\documentclass[acmtog,nonacm,balance=false]{acmart}
\usepackage{amsmath}
\usepackage{algorithm}
\usepackage{algpseudocode}

\setcopyright{none}
\usepackage{amsthm}
\theoremstyle{remark}
\newtheorem{remark}{Remark}

\newcommand{\wopfigure}[2]{\IfFileExists{#1}{\includegraphics[width=\linewidth,height=#2,keepaspectratio]{#1}}{\fbox{\parbox[c][#2][c]{0.94\linewidth}{\centering\small Figure asset not included\\\texttt{\detokenize{#1}}}}}}
\begin{document}
\title{Walk on Multiscale Media: Compiling Editable and Composable Transport Medium Prefabs for Monte Carlo PDE Solvers}
\author{Yunfeng Liao}
\author{Xiucheng Li}
\affiliation{\institution{Harbin Institute of Technology (Shenzhen)}\city{Shenzhen}\state{Guangdong}\country{China}}
\renewcommand{\shortauthors}{Liao and Li}
\begin{abstract}
Monte Carlo PDE solvers enable mesh-free pointwise queries, but repeatedly traversing heterogeneous media is costly, particularly in scenes with reusable or dynamically changing materials. We introduce \textbf{Walk on Multiscale Media (WoMM)}, which compiles fine-scale transport into reusable, editable and composable \textit{transport prefabs}. A probabilistic interpretation of multiscale finite-element bases and condensed operators yields coarse transitions retaining fine-scale material effects. Each prefab combines port-level operators, entrance-dependent exit distributions, and Poisson-marked trajectory records, enabling incremental updates for overlapping absorption fields, gradient-field steering through path reweighting and transient diffusion. Integrated with conventional Walk-on-Spheres in a hybrid continuous--discrete solver, these prefabs accelerate repeated pointwise PDE queries and support interactive material edits without repeatedly solving the fine-scale problem.
\end{abstract}

\keywords{Monte Carlo PDE Solvers, Walk-on-Spheres, Multiscale Methods, Heterogeneous Media, Stochastic Transport}

\begin{teaserfigure}
    \centering
    \includegraphics[width=0.95\textwidth]{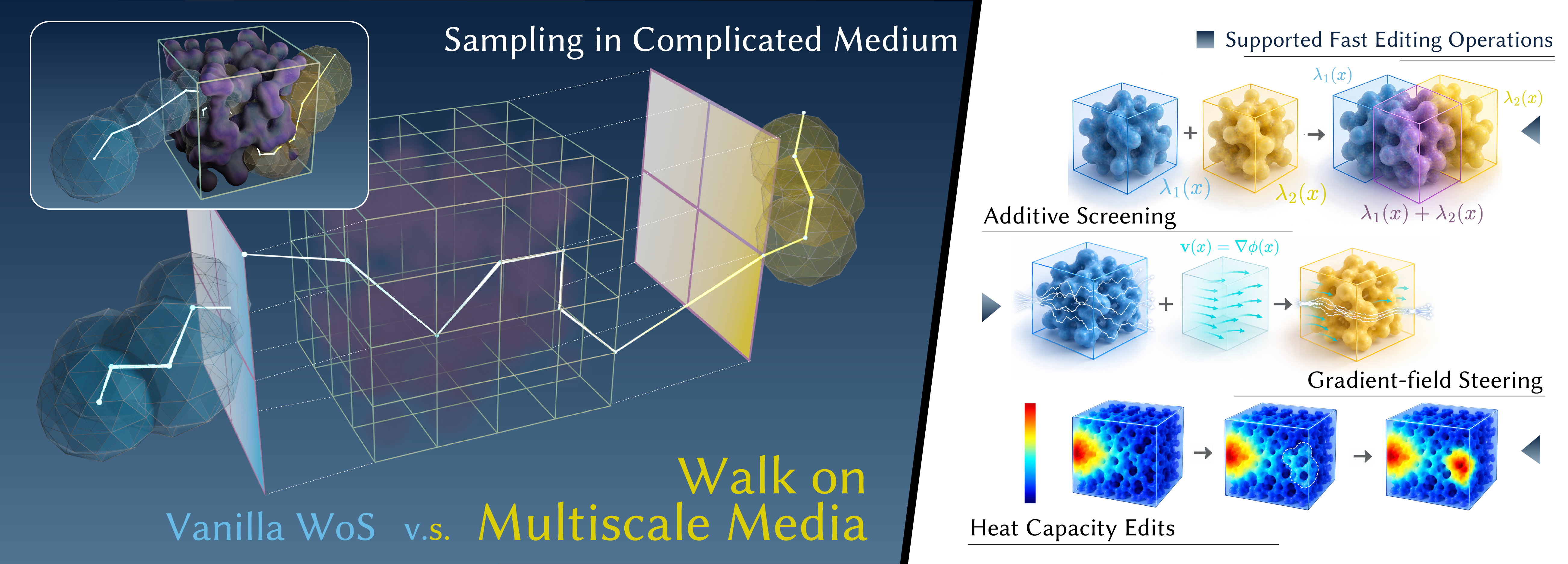}
\caption{Complex heterogeneous media make conventional random walks expensive to evaluate repeatedly. Our method compiles their fine-scale transport into reusable multiscale prefabs, allowing stochastic trajectories to traverse intricate materials through efficient macro-transitions. The same precomputed transport representation supports interactive physical edits through cached-path reweighting, without reconstructing the underlying heterogeneous medium.}
    \Description{1}
    \label{fig:teaser}
\end{teaserfigure}
\maketitle
\thispagestyle{fancy}\fancyfoot[C]
{\footnotesize Preprint}

\section{Introduction}
\label{sec:introduction}
Heterogeneous media are ubiquitous in computer graphics, from porous materials and composites to spatially varying absorbers and heat-conducting objects. Their fine-scale structures can strongly influence macroscopic transport, making direct simulation expensive even when only a coarse response is required. This challenge becomes particularly relevant in interactive scenes, where material instances are repeatedly queried, spatially composed, or modified. Ideally, the transport behavior of a complex medium should be compiled once into a reusable representation, much like geometric assets, while remaining responsive to changes in the surrounding environment.

Monte Carlo PDE solvers, particularly Walk on Spheres (WoS), enable pointwise solution evaluation without constructing a global volumetric discretization~\cite{muller1956continuous,sawhney2020montecarlo}. Extensions to variable coefficients broaden their applicability to heterogeneous materials~\cite{sawhney2022variable}, but repeated queries still incur substantial stochastic transport costs. In contrast, multiscale finite-element methods (MsFEM) compress fine-scale heterogeneity into effective operators~\cite{hou1997msfem,e2003hmm,efendiev2013gmsfem}, although these operators are typically designed for coarse discretized systems rather than reusable pointwise stochastic queries. Recent approaches, including Boundary Value Caching~\cite{miller2023boundary}, Walk on Probes~\cite{huang2026probes}, Walk on Decomposed Subdomains~\cite{jambon2026decomposed}, and Monte Carlo Steklov Operators~\cite{maesumi2026steklov}, further demonstrate the potential of reusing stochastic computations. However, combining multiscale heterogeneous transport with arbitrary entrance queries, incremental material editing, and spatial composition remains challenging.

We introduce \emph{Walk on Multiscale Media}, a framework that compiles heterogeneous regions into reusable \emph{transport prefabs}. Our key insight is that multiscale basis functions can be interpreted as first-exit probabilities, connecting operator condensation with stochastic transport. Each prefab combines a compact port-level operator, entrance-dependent exit distributions, and precomputed random-walk events. Rather than recomputing fine-scale transport after every material modification, these cached events support incremental updates through path reweighting. The resulting prefabs can be instantiated, composed, and integrated into continuous Monte Carlo solvers, while adaptive resolution controls the cost of evaluating their transport responses.

Our main contributions are:
\begin{itemize}
    \item \textbf{Reusable multiscale transport prefabs.} We establish a probabilistic formulation of multiscale basis functions and condensed operators, compiling fine-scale heterogeneous transport into coarse port responses, entrance-dependent exit distributions, and reusable stochastic path caches.

    \item \textbf{Editable and composable stochastic transport.} We develop cached-event reweighting and an edit-local resolvent estimator for incremental material updates without resampling spatial trajectories. The same representation supports overlapping absorption fields, gradient-field steering, and temporal transport edits.

    \item \textbf{Adaptive continuous--discrete Monte Carlo simulation.} We integrate transport prefabs with exterior WoS through position-resolving macro-transitions and introduce query-centric multiscale selection to balance transport accuracy against computational cost.
\end{itemize}

\section{Related Work}
\paragraph{Homogenization and multiscale discretization.}
Rapidly oscillating diffusivities arise in composites, porous media, and heterogeneous heat conduction, where unresolved channels and barriers can affect macroscopic fluxes. Classical homogenization characterizes these effects through effective operators rather than pointwise coefficient averaging~\cite{bensoussan1978asymptotic,allaire1992homogenization,jikov1994homogenization}. Numerical approaches avoid explicitly resolving every microscopic feature at query time: MsFEM incorporates heterogeneous coefficients into local elliptic basis functions~\cite{hou1997msfem}, while heterogeneous multiscale methods couple microscopic computations to macroscopic models~\cite{e2003hmm}. GMsFEM enriches local approximation spaces~\cite{efendiev2013gmsfem}; localized orthogonal decomposition (LOD) and operator-adapted multiresolution methods construct energy-adapted bases for rough coefficients~\cite{malqvist2014localization,owhadi2017multigrid}. These methods motivate our treatment of fine-scale transport through local operator compression. Our objective, however, is an editable stochastic interface rather than only an accelerated coarse-grid solve.

\paragraph{Walk-on-spheres for heterogeneous and transient PDEs.}
Walk on Spheres (WoS) estimates elliptic solutions from Brownian first-exit distributions~\cite{muller1956continuous}; its modern graphics formulations enable mesh-free, pointwise PDE evaluation on complex domains~\cite{sawhney2020montecarlo}. Sawhney et al.~\cite{sawhney2022variable} extend WoS to spatially varying diffusion and reaction coefficients using transformations and null-scattering estimators, while Walk on Stars supports mixed Neumann--Dirichlet conditions~\cite{sawhney2023stars}. Recent work further addresses general divergence-form operators via precomputed inverse-Laplacian and harmonic-extension operators~\cite{cimpean2026extended}, and transient heat propagation through space--time walks~\cite{bao2026heatstars,zhou2026timedependent}. These approaches broaden the PDEs accessible to stochastic solvers while this work turns to amortizing repeated transport queries and local material edits in a fixed, highly oscillatory medium.

\paragraph{Reusable Stochastic Operators and Compositional Simulation.}
Boundary Value Caching reuses estimated boundary solutions~\cite{miller2023boundary}, while Walk on Probes shares stochastic estimates through fixed spherical probes~\cite{huang2026probes}. Walk on Decomposed Subdomains (WoDS) precomputes local first-passage operators and assembles a global interface system, supporting localized recomputation under geometric edits~\cite{jambon2026decomposed}. Monte Carlo Steklov Operators similarly constructs reusable boundary operators through stochastic estimation~\cite{maesumi2026steklov}. Building on these approaches, we employ multiscale transport prefabs that reuse local stochastic responses across queries and material edits.

\section{Preliminaries}
\label{sec:preliminaries}

\subsection{Multiscale Discretization of Elliptic PDEs}
\label{sec:multiscale-background}

Highly heterogeneous media contain conducting channels and barriers at scales much smaller than the desired simulation resolution. Replacing their coefficients by cell averages can discard the resulting long-range transport effects. MsFEM incorporates fine-scale heterogeneity into coarse representations by constructing multiscale basis functions from local elliptic problems, specifically satisfying on a subdomain $K$
\begin{equation}
\label{eq:prefab-local-pde}
-\nabla\!\cdot\!\bigl(a_K(\mathbf x)\nabla u\bigr)
+\lambda_K^0(\mathbf x)u=0\quad\text{in }K,
\quad u|_{\partial K}=g,
\end{equation}
with  heterogeneous diffusivity $a_K>0$ and screening coefficient $\lambda_K^0\geq0$. The local problem admits the energy functional $E_K(u)=\frac12\int_K(a_K|\nabla u|^2+\lambda_K^0u^2)\,d\mathbf x$, whose minimizer under prescribed Dirichlet data gives the screened harmonic extension as MsFEM basis functions. To construct these basis functions probabilistically, one can discretize the same local energy on a fine-grid graph using symmetric nonnegative conductances $c_{xy}=c_{yx}\geq0$ between neighboring nodes and lumped screening coefficients $\lambda_K^0(x)\geq0$, giving the resulting discrete energy $E_h(\mathbf u)=\frac12\sum_{\{x,y\}}c_{xy}(u_x-u_y)^2+\frac12\sum_x\lambda_K^0(x)u_x^2=\frac12\mathbf u^\top H_K\mathbf u$. Consequently, the local stiffness matrix satisfies $(H_K)_{xy}=-c_{xy}$ for neighboring $x\ne y$ and $(H_K)_{xx}=\sum_{z\sim x}c_{xz}+\lambda_K^0(x)$ where $z$ is adjacent to $x$. The nonnegative conductances preserve the monotonicity required for the random-walk interpretation developed in Sec.~\ref{sec:multiscale-probability}. Partitioning the fine-grid nodes into $n_b$ boundary and $n_i$ interior nodes yields $H_K=\begin{pmatrix}H_{bb}&-C\\-C^\top&H_{ii}\end{pmatrix}$, where $C\geq0$ contains the boundary--interior couplings and $H_{ii}$ is a nonsingular $M$-matrix. To define the coarse degrees of freedom, we partition the fine boundary into $p$ disjoint ports and introduce $T\in\{0,1\}^{n_b\times p}$, which assigns each boundary node to exactly one port ($T\mathbf1_p=\mathbf1_b$). 

For port values $\mathbf v \in \mathbb R^p$, the boundary trace is $\mathbf u_b=T\mathbf v$. The interior equilibrium equations $H_{ii}\mathbf u_i=C^\top\mathbf u_b$ follow from the homogeneous discrete PDE. Since $H_{ii}$ is the interior Dirichlet operator, its inverse $G_K=H_{ii}^{-1}$ is the discrete Green operator, mapping interior source terms to solutions with zero boundary values. Consequently, defining $R=C^\top T$ and $\Phi_K=G_KR$ gives the screened harmonic extension $\mathbf u_i=\Phi_K\mathbf v$ of the prescribed port values. Thus the corresponding fine-grid state is $\mathbf u=(\mathbf u_b,\mathbf u_i)^\top$. Minimizing the energy $\min_{\mathbf u_i} \mathbf u^\top H_K \mathbf u$ by eliminating $\mathbf u_i$ yields the minimum $\tfrac12 \mathbf v^\top S_K \mathbf v$ where \begin{equation}
    S_K={T^\top H_{bb}T}-{R^\top H_{ii}^{-1}R}=:D_K-J_K.
\end{equation} 

The Schur complement $S_K$ is the corresponding discrete Dirichlet-to-Neumann operator: $D_K$ accounts for boundary contributions, while $J_K=R^\top\Phi_K$ captures transport through the eliminated interior. Thus the coarse system retains the fine-scale response without resolving its interior at every query.

\subsection{Random-Walk Interpretation of Multiscale Operators}
\label{sec:multiscale-probability}
Let $L_0$ denote the unscreened conductance generator, $(L_0f)(x)=\sum_{y\sim x}c_{xy}[f(y)-f(x)]$. Its continuous-time Markov chain (CTMC) holds at $x$ for $t\sim\mathrm{Exp}(d_x^0)$, where $d_x^0=\sum_{y\sim x}c_{xy}$, before jumping to $y$ with probability $c_{xy}/d_x^0$. Screening $\kappa(x)\geq0$ introduces killing at rate $\kappa(x)$, giving total event rate $d_x^\kappa=d_x^0+\kappa(x)$ and killing probability $\kappa(x)/d_x^\kappa$. Let $\partial K_h$ be the fine boundary nodes partitioned into ports $\Gamma_\beta=\{b\in\partial K_h:T_{b\beta}=1\}$ for $\beta \in[p]$. For an unscreened walk, with its first boundary-hitting time $\tau$ and $Y=X_\tau\in\partial K_h$, the screened first-exit kernel is $F_\kappa(x,b)=\mathbb E_x^0[\exp(-\int_0^\tau\kappa(X_t)\,dt)\mathbf1_{\{Y=b\}}]$ with port-exit probabilities $\phi_\beta^\kappa(x)=\sum_{b\in\Gamma_\beta}F_\kappa(x,b)$. Thus the screened MsFEM basis satisfies $(\Phi_\kappa)_{x\beta}=\phi_\beta^\kappa(x)$, where $\Phi_\kappa$ denotes the extension under screening $\kappa$ and $\Phi_{\lambda_K^0}=\Phi_K$.

This interpretation extends naturally to the condensed operators. Since $J_\kappa=R^\top\Phi_\kappa$, its entries $(J_\kappa)_{\alpha\beta}=\sum_{x\in i}R_{x\alpha}\phi_\beta^\kappa(x)=\rho_\alpha\mathbb P_{X_0\sim q_\alpha}^\kappa\{Y\in\Gamma_\beta\}$ represent unnormalized transport between ports, where $\rho_\alpha=\sum_{x\in i}R_{x\alpha}$ and $q_\alpha(x)=R_{x\alpha}/\rho_\alpha$ for $\rho_\alpha>0$. Accordingly, the Schur complement $S_\kappa=D_\kappa-J_\kappa$, with $D_\kappa=T^\top H_{bb}^\kappa T$, induces coarse transitions $P_{\alpha\beta}^{K,\kappa}=-(S_\kappa)_{\alpha\beta}/s_\alpha$ for $\beta\ne\alpha$ and absorption probability $P_{\alpha\dagger}^{K,\kappa}=(S_\kappa\mathbf1_p)_\alpha/s_\alpha$ where $s_\alpha=(S_\kappa)_{\alpha\alpha}>0$ in the sense of identifying stiffness matrix as transport matrix. While these transitions preserve coarse transport, they discard the precise exit location $b$, motivating the additional fine-boundary kernel $F_\kappa$.

\paragraph{Doob conditioning.} The first-exit probabilities also characterize the CTMC conditioned on a prescribed boundary event. With $h(x)=\phi_\beta^\kappa(x)$, $h(b)=\mathbf1_{\{b\in\Gamma_\beta\}}$ and $(L_0-\kappa)h=0$ in the interior, conditioning on this event gives the Doob $h$-transform with jump rates $c_{xy}^h=c_{xy}h(y)/h(x)$ wherever $h(x)>0$, including jumps to boundary nodes, i.e., a conditioned process can be obtained by merely reweighting. As $h(x)=\sum_{y\sim x}c_{xy}h(y) /d_x^\kappa$, the transition probability $P_{xy}^h= \frac{c_{xy}}{d_{x}^\kappa} \frac{h(y)}{h(x)}$ has $\sum_{y}P_{xy}^h=1$ thus having no killing, and also remains its total departure rate since  $\sum_{y\sim x}c_{xy}^h=d_{x}^\kappa$. 

\paragraph{Scalar $h$-transforms.} For a general positive function $h$, the scalar $h$-transform of the unscreened generator is defined by $L_0^h f=h^{-1}L_0(hf)-fh^{-1}L_0h$. On the conductance graph, this yields $L_0^h f(x)=\sum_{y\sim x}c_{xy}^h\allowbreak [f(y)-f(x)]$. Writing $V_h=L_0h/h$, the transformed path law relative to the reference CTMC satisfies $d\mathbb P_x^h/d\mathbb P_x^0=h(Y)/h(x)\allowbreak \exp(-\int_0^\tau V_h(X_t)\,dt)$ under the suitable stopped-process regularity conditions. In the continuous setting, the diffusion generator $\mathcal L_0f=\nabla\cdot(a_K\nabla f)$ transforms into $\mathcal L_0^hf=\mathcal L_0f+2a_K\nabla\log h\cdot\nabla f$, inducing the gradient drift $\mathbf b_h=2a_K\nabla\log h$. By Girsanov's theorem, the logarithmic likelihood ratio is $\int_0^\tau\sqrt{2a_K}\nabla\log h\cdot dB_t- a_K|\nabla\log h|^2dt$. Applying Ito's formula to $d\log h(X_t)$ converts the stochastic integral into $\log[h(Y)/h(x)]-\int_0^\tau(\mathcal L_0h/h-a_K|\nabla\log h|^2)dt$, yielding the same endpoint-weighted representation with continuous correction $\mathcal L_0h/h$. When $h$ is harmonic with respect to the reference generator, the path-dependent correction vanishes, leaving only the endpoint weight $h(Y)/h(x)$.

\section{Editable and Composable Multiscale Transport Prefabs}

We compile each heterogeneous region $K$ into a reusable stochastic transport prefab
\begin{equation}
\label{eq:prefab-method-data}
\text{Prefab}_K=(K,\mathcal A_K^0,T,\Phi_K,S_K,\mathcal E_K, F_K),
\end{equation}
where $\mathcal A_K^0$ specifies the fixed fine-scale diffusion operator, $\Phi_K$ and $S_K$ are the multiscale operators defined in Sec.~\ref{sec:multiscale-background}, and $\mathcal E_K$ is a bank of compact event records instead of recording complete trajectories. For each fine-grid interior node $x$, we generate $N_x$ independent unscreened walks indexed by $r$. Each walk is compressed into an event record $(Y_{x,r},\Pi_{x,r})$, where $Y_{x,r}\in\partial K_h$ and $\Pi_{x,r}$ is the multiset of visited nodes at Poisson candidate events generated along the walk at a fixed rate $\mu$. Reweighting these records under a screening field $\kappa$ reconstructs the first-exit kernel defined in Sec.~\ref{sec:multiscale-probability}:
\begin{equation}
\widehat F_\kappa(x,b)=\frac{1}{N_x}\sum_{r=1}^{N_x}
\mathbf1_{\{Y_{x,r}=b\}} \prod_{z\in\Pi_{x,r}}\left(1-\frac{\kappa(z)}{\mu}\right).
\end{equation}
The same event bank supports incremental updates of the condensed transport operator, while the reconstructed exit kernels provide reusable macro-transitions for hybrid Monte Carlo simulation.

\subsection{Stochastic Prefab Precomputation}
\label{sec:prefab-precomputation}
We precompute the multiscale responses and entrance-dependent exit distributions using compact event records. For the baseline screening $\lambda_K^0$, the multiscale basis is estimated by aggregating the weighted terminal exits over each port:
\begin{equation}
\label{eq:prefab-stochastic-basis}
(\widehat\Phi_K)_{x\beta}
=\sum_{b\in\Gamma_\beta}\widehat F_{\lambda_K^0}(x,b),
\qquad
\widehat S_K=D_K-R^\top\widehat\Phi_K.
\end{equation}
Since $R$ is nonzero only at interior nodes adjacent to the boundary, assembling $\widehat S_K$ requires sampling only at these nodes, without evaluating basis gradients or normal derivatives.

\paragraph{Entrance-dependent transport.}
To enable spatially resolved macro-transitions, we additionally estimate $\widehat F_\kappa(x_a,\cdot)$ at sparse starting anchors $x_a$ distributed throughout $K$. Interior anchors use weighted terminal-exit records from $\mathcal E_K$, whereas boundary anchors have deterministic exits and require no sampling. For a starting point $x$, we select a neighboring anchor $x_a$ according to nonnegative trilinear weights $\omega_a(x)$ and sample its cached exit distribution. This realizes the interpolated kernel
\begin{equation}
\label{eq:prefab-exit-atlas}
\widehat F_\kappa(x,b)
=\sum_{a\in\mathcal N(x)}
\omega_a(x)\widehat F_\kappa(x_a,b),
\quad
\sum_a\omega_a(x)=1.
\end{equation}
The remaining probability corresponds to absorption. Thus a single macro-transition replaces the fine-scale walk through the prefab interior, while preserving an entrance-dependent fine-boundary exit location. Unlike the coarse stiffness assembly, this construction requires sparse spatial coverage of the interior, neither a distribution at every interior node nor full simulation inward.

\paragraph{Port-guided sampling.}
Under strong screening, direct sampling can produce many trajectories with negligible surviving exit weight. We also investigated the Doob conditioning introduced in Sec.~\ref{sec:multiscale-probability} to preferentially sample successful boundary exits. However, the observed improvement was limited. It can be that conditioning on successful exit does not necessarily improve the sampling of localized transport changes, while the guiding function introduces preprocessing overhead and may become less effective after material edits. This motivates shifting the estimation target from the full transport response to its edit-induced difference. Later we use the resolvent identity to express this difference through the edited region and estimate it from the corresponding event records.

\subsection{Incremental Composition via Marked Paths}
\label{sec:incremental-updates}

Overlapping prefabs require different treatments according to whether they describe the same material or additive physical effects. For the same medium, overlap is only representational: if $\mathcal T_1$ and $\mathcal T_2$ are consistent first-exit operators, then $\mathcal T_1u(x)=\mathcal T_2u(x)=u(x)$ for $x\in K_1\cap K_2$. A weighted split or stochastic choice between these representations therefore preserves the expectation without duplicating the medium, up to discretization and interface-interpolation errors. In contrast, physically superposed screening fields add in the overlap, $\lambda(x)=\lambda_1(x)+\lambda_2(x)$. We focus on such nonnegative edits while keeping the underlying conductances fixed. For brevity, quantities before and after edits are labeled by $\text{pre},\text{post}$ or $0,1$. 

\paragraph{Edit-local transport difference.}
For one prefab, write $\kappa_{0}=\lambda_K^0+v_{0}$ and $\kappa_{1}=\kappa_{0}+v$, with $v_{0},v\geq0$. On the interior nodes, let $A_{0}=H_{ii}+\operatorname{diag}(v_{0})$ and $A_{1}=A_{0}+V$, where $V=\operatorname{diag}(v|_i)$. We seek to update the condensed operator \(S_K\). Since the boundary contribution changes explicitly as \(D_K^{\mathrm{post}}=D_K^{\mathrm{pre}}+T^\top V_bT\), it suffices to estimate the change in \(J_K\). Using $\Phi_t=A_t^{-1}R$ and $J_t=R^\top\Phi_t$, the resolvent identity yields
\begin{equation}
\label{eq:prefab-guided-identity}
\Delta J:=J_{0}-J_{1}
=\Phi_{0}^\top V\Phi_{1},
(\Delta J)_{\alpha\beta}
=\sum_{x\in\mathcal S}v(x)\phi_\alpha^{\mathrm{pre}}(x)\phi_\beta^{\mathrm{post}}(x),
\end{equation}
where $\mathcal S=\{x\in i:v(x)>0\}$. Unlike recomputing the modified $J_{1}$ from the ground up, this expression requires first-exit probabilities only at the \emph{edited interior nodes}.

\paragraph{Unbiased replay from event records.}
The bank $\mathcal E_K$ stores the unscreened terminal-exit and Poisson-mark records defined at the beginning of this section. For local screening-edit updates, records are recompiled by reweighting to fetch new path weight $W_{x,r}^\text{post}$ rather than rerunning simulations. Assuming $\kappa_{1}\leq\mu$, we compute 
\begin{equation}
\label{eq:prefab-replay-weights}
W_{x,r}^t=\prod_{z\in\Pi_{x,r}}\left(1-\frac{\kappa_t(z)}{\mu}\right),
a_{r\alpha}=W_{x,r}^{\mathrm{pre}}\mathbf1_{\{Y_{x,r}\in\Gamma_\alpha\}},
b_{r\beta}=W_{x,r}^{\mathrm{post}}\mathbf1_{\{Y_{x,r}\in\Gamma_\beta\}}.
\end{equation}
The corresponding expectations are $\phi_\alpha^{\mathrm{pre}}(x)$ and $\phi_\beta^{\mathrm{post}}(x)$. Since weights from one record are correlated, their sample means cannot simply be multiplied without bias. Instead, define $A_\alpha(x)=\sum_r a_{r\alpha}$, $B_\beta(x)=\sum_r b_{r\beta}$, and $C_\alpha(x)=\sum_r a_{r\alpha}b_{r\alpha}$. To obtain an unbiased estimator of the product $\phi_\alpha^\text{pre}(x)\phi_\beta^\text{post}(x)$, we adopt a second-order U-statistic that pairs distinct independent samples, thereby eliminating the diagonal bias arising from products of sample means:
\begin{equation}
\label{eq:prefab-replay-estimator}
\widehat{\Delta J}_{\alpha\beta}
=\sum_{x\in\mathcal S}\frac{v(x)}{N(N-1)}
\left[A_\alpha(x)B_\beta(x)-\mathbf1_{\{\alpha=\beta\}}C_\alpha(x)\right].
\end{equation}
The correction removes same-record products; those products vanish for $\alpha\ne\beta$ because each record has a single terminal port. We symmetrize $\widehat{\Delta J}$ and update the condensed operator by
\begin{equation}
\label{eq:prefab-incremental-S}
\widehat S_{\mathrm{post}}=\widehat S_{\mathrm{pre}}
+\widehat{\Delta J}+T^\top V_bT.
\end{equation}
No new spatial walks are needed. For the fine-boundary kernels used by the hybrid solver, the same post-edit weights are applied to records at the sparse interior interpolation anchors to obtain $\widehat F_{\kappa_{\mathrm{post}}}$ through the estimator already defined above while boundary anchors remain deterministic. Thus the coarse response is updated from the edited-node records, whereas macro-transition distributions are refreshed only at their interpolation anchors.

\begin{theorem}[Unbiased edit-local replay]
\label{thm:prefab-replay-variance}
Assume that the unscreened records are independent within and across starting nodes, and $0\leq\kappa_{\mathrm{pre}}\leq\kappa_{\mathrm{post}}\leq\mu$. With $N\geq2$ records per edited node, Eq.~\eqref{eq:prefab-replay-estimator}, before or after symmetrization, satisfies
\begin{equation}
\label{eq:prefab-replay-unbiased-variance}
\mathbb E[\widehat{\Delta J}_{\alpha\beta}]=\Delta J_{\alpha\beta},\quad
\operatorname{Var}(\widehat{\Delta J}_{\alpha\beta})
\leq\frac1N\sum_{x\in\mathcal S}v(x)^2.
\end{equation}
\end{theorem}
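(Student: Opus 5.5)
Unbiasedness comes from the single-record identity, and the variance bound from a U-statistic variance computation, with the nodes in $\mathcal S$ decoupled by independence.

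\emph{Single-record identity.} First I will verify that each record gives unbiased pre- and post-edit weights. Condition on the unscreened path. The Poisson marks at rate $\mu$ along a path with holding times $t_k$ at nodes $z_k$ form a Poisson process, so
\[
\mathbb E\Bigl[\prod_{z\in\Pi}\bigl(1-\kappa(z)/\mu\bigr)\Bigm| \text{path}\Bigr]
=\exp\Bigl(-\int_0^\tau\kappa(X_t)\,dt\Bigr)
\]
by the probability generating functional of a Poisson process. This step uses $0\le\kappa\le\mu$, so each factor lies in $[0,1]$. Taking the outer expectation gives $\mathbb E[a_{r\alpha}]=\phi^{\mathrm{pre}}_\alpha(x)$ and $\mathbb E[b_{r\beta}]=\phi^{\mathrm{post}}_\beta(x)$, by the definition of $F_\kappa$ in Sec.~\ref{sec:multiscale-probability}.

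\emph{Unbiasedness.} Fix $x\in\mathcal S$ and expand
\[
A_\alpha B_\beta-\mathbf 1_{\{\alpha=\beta\}}C_\alpha=\sum_{r\ne s}a_{r\alpha}b_{s\beta}+\sum_r a_{r\alpha}b_{r\beta}-\mathbf 1_{\{\alpha=\beta\}}\sum_r a_{r\alpha}b_{r\alpha}.
\]
Each record has a single terminal port, so $a_{r\alpha}b_{r\beta}=0$ for $\alpha\ne\beta$. The diagonal terms therefore cancel in every case, leaving exactly the off-diagonal sum $\sum_{r\ne s}a_{r\alpha}b_{s\beta}$. Records $r\ne s$ are independent, so each summand has mean $\phi^{\mathrm{pre}}_\alpha\phi^{\mathrm{post}}_\beta$, and there are $N(N-1)$ of them. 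Summing with weights $v(x)$ and using Eq.~\eqref{eq:prefab-guided-identity} gives $\mathbb E[\widehat{\Delta J}_{\alpha\beta}]=\Delta J_{\alpha\beta}$. Symmetrization averages the $(\alpha,\beta)$ and $(\beta,\alpha)$ entries. The true $\Delta J$ is symmetric because $\Phi_0^\top V\Phi_1=R^\top A_0^{-1}VA_1^{-1}R$ and $A_0^{-1}VA_1^{-1}=A_0^{-1}-A_1^{-1}$ is symmetric, so unbiasedness is preserved.

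\emph{Variance.} Records at distinct nodes are independent, so
\[
\operatorname{Var}(\widehat{\Delta J}_{\alpha\beta})=\sum_{x\in\mathcal S} v(x)^2\operatorname{Var}(U_x),
\]
where $U_x=\frac1{N(N-1)}\sum_{r\ne s}a_{r\alpha}b_{s\beta}$ is a two-sample-type U-statistic with kernel $k(r,s)=a_{r\alpha}b_{s\beta}$, and $k\in[0,1]$.

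This is the main obstacle: getting the clean constant $1/N$ rather than something like $2/N$. I will use the Hoeffding decomposition of the symmetrized kernel $\tilde k(r,s)=\tfrac12(a_rb_s+a_sb_r)$, which has values in $[0,1]$:
\[
\operatorname{Var}(U)=\frac{4(N-2)}{N(N-1)}\zeta_1+\frac{2}{N(N-1)}\zeta_2 .
\]
Here $\zeta_2=\operatorname{Var}\tilde k\le\mathbb E\tilde k^2\le\mathbb E\tilde k\le1$. For the first term, $\zeta_1=\operatorname{Var}\bigl(\tfrac12(a_1\phi^{\mathrm{post}}_\beta+b_1\phi^{\mathrm{pre}}_\alpha)\bigr)\le\tfrac14$, since this quantity lies in $[0,1]$ and so its variance is at most $1/4$ by Popoviciu's inequality. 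Combining,
\[
\operatorname{Var}(U)\le\frac{(N-2)+2}{N(N-1)}=\frac1{N-1}.
\]
To reach $1/N$ I will sharpen the $\zeta_2$ term. Since $\mathbb E\tilde k^2\le\mathbb E\tilde k=\theta:=\phi^{\mathrm{pre}}_\alpha\phi^{\mathrm{post}}_\beta$, we get $\zeta_2\le\theta-\theta^2$. Also $\zeta_1\le\tfrac12\zeta_2$ holds generally, or alternatively $\zeta_1\le\theta(1-\theta)$ via the $[0,1]$-mean bound. Either gives
\[
\operatorname{Var}(U)\le\frac{2\zeta_2}{N}\cdot\frac{N-2+1}{N-1}\le\frac{2\theta(1-\theta)}{N}\le\frac1{2N}.
\]
Symmetrization averages two such estimators, and the variance of an average is at most the maximum of the variances, so the bound is kept. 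Summing over $x\in\mathcal S$ completes the variance bound.
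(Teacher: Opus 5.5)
Your proposal is correct and follows the paper's proof. The shared steps are Poisson-mark unbiasedness of each record's weights, reduction of the bracket to $\sum_{r\ne s}a_{r\alpha}b_{s\beta}$, independence across edited nodes, a variance bound for an order-two U-statistic with $[0,1]$-valued kernel, and Cauchy--Schwarz for symmetrization. You add two useful details the paper leaves implicit: you make the U-statistic bound explicit as $\operatorname{Var}(U)\le 2\zeta_2/N$ via the Hoeffding decomposition and $\zeta_1\le\zeta_2/2$, and you check that $\Delta J$ is symmetric so that symmetrization preserves unbiasedness.

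One small slip: the alternative route $\zeta_1\le\theta(1-\theta)$ does not give the displayed chain through $2\theta(1-\theta)/N$. It only gives $\frac{(4N-6)\theta(1-\theta)}{N(N-1)}$, which is still below $1/N$, so rely on $\zeta_1\le\zeta_2/2$ for the sharper constant.
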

\begin{proof}
Poisson marking gives $\mathbb E[a_{r\alpha}]=\phi_\alpha^{\mathrm{pre}}(x)$ and $\mathbb E[b_{r\beta}]=\phi_\beta^{\mathrm{post}}(x)$. The bracket in Eq.~\eqref{eq:prefab-replay-estimator} equals $\sum_{r\ne s}a_{r\alpha}b_{s\beta}$, whose expectation is $N(N-1)\phi_\alpha^{\mathrm{pre}}(x)\phi_\beta^{\mathrm{post}}(x)$. This proves unbiasedness by Eq.~\eqref{eq:prefab-guided-identity}. For each $x$, the distinct-record average is an order-two U-statistic with symmetric kernel $(a_{r\alpha}b_{s\beta}+a_{s\alpha}b_{r\beta})/2\in[0,1]$, and hence variance at most $1/N$. Independence across starting nodes gives the bound and symmetrization preserves it by Cauchy--Schwarz.
\end{proof}

\begin{remark}[Computational Cost of Edit-Local Replay]
For $p$ ports, $N$ records per edited node, and at most $M$ marks per record, updating the coarse operator costs
$\mathcal{O}(|S|[N(1+M)+p^2])$
and requires no new spatial trajectories.
\end{remark}

\begin{remark}[Scalar rescaling of diffusion]
    Uniform rescaling of a prefab's diffusion tensor requires no new spatial trajectories. For any $c>0$, replacing $A_K$ with $cA_K$ while keeping the screening field $\kappa$ unchanged yields $F_{cA_K,\kappa}=F_{A_K,\kappa/c}, S_{cA_K,\kappa}=cS_{A_K,\kappa/c}.$ Thus, diffusion rescaling is handled by reweighting the existing Poisson-marked records with effective screening $\kappa/c$, provided $\kappa/c\leq\mu$. In the absence of screening, exit distributions remain unchanged and the condensed operator is simply scaled by $c$. Different coarse cells may be rescaled independently, with their updated operators reassembled into the transport skeleton.
\end{remark}

\subsection{Low-Dimensional Steering and Temporal Edits}
\label{sec:steering-temporal}

\paragraph{Gradient-field steering.}
The scalar $h$-transform introduced in Sec.~\ref{sec:multiscale-probability} provides a way to modify transport direction while preserving the underlying diffusivity and physical screening. In particular, a positive field $h$ induces the drift $\mathbf b_h=2a_K\nabla\log h$, with the path-dependent correction $V_h=L_0h/h$. Rather than prescribing a full-resolution vector field, we parameterize steering by positive port weights $\mathbf w\in\mathbb R_{>0}^p$ and construct their unscreened harmonic extension:
\begin{equation}
\label{eq:steering-harmonic-field}
h_{\mathbf w}|_b=T\mathbf w,\qquad
h_{\mathbf w}|_i=(H_{ii}-\Lambda_i)^{-1}R\mathbf w,
\quad \Lambda_i=\operatorname{diag}(\lambda_K^0|_i).
\end{equation}
Since $L_0h_{\mathbf w}=0$ in the interior, its scalar correction vanishes. The steered first-exit kernel therefore reduces to endpoint reweighting:
\begin{equation}
\label{eq:steering-harmonic-kernel}
F_\kappa^{\mathbf w}(x,b)
=\frac{w_{\beta(b)}}{h_{\mathbf w}(x)}F_\kappa(x,b),
\end{equation}
where $\beta(b)$ identifies the port containing $b$. This transformation preserves the physical screening field $\kappa$ and requires only $p-1$ independent port-weight ratios. For directional control, the weights may be chosen as $w_\beta=\exp(\theta\mathbf d\cdot(\mathbf p_\beta-\mathbf p_K))$, where $\mathbf d$ specifies the preferred direction, $\theta$ controls its strength, and $\mathbf p_\beta,\mathbf p_K$ denote the port and prefab centers.

More generally, nonharmonic steering can also be evaluated using the existing event records. For a record starting at $x$ and terminating at $Y$, the combined steering and screening weight is
\begin{equation}
\label{eq:steering-marked-weight}
W_{x,r}^{h,\kappa}
=\frac{h(Y_{x,r})}{h(x)}
\prod_{z\in\Pi_{x,r}}
\left(1-\frac{\kappa(z)+V_h(z)}{\mu}\right).
\end{equation}
Poisson averaging reproduces the path likelihood ratio and screening survival factor, provided $\kappa+V_h\leq\mu$. Unlike the harmonic construction, a nonzero $V_h$ introduces additional weight variance, particularly when it takes negative values. We therefore favor harmonic steering, for which the cached exit statistics require only endpoint reweighting.

\paragraph{Temporal Heat Capacity Edits.}
We also support spatially varying heat capacity $m_K(x)>0$ in the transient screened diffusion equation
\begin{equation}
\label{eq:temporal-evolution}
m_K\partial_tu-\nabla\cdot(a_K\nabla u)
+(\lambda_K^0+v)u=0.
\end{equation}
For zero initial data, the Laplace transform with parameter $\sigma\geq0$ reduces this equation to the same elliptic problem with effective screening $\kappa_\sigma=\lambda_K^0+v+\sigma m_K$. Its interior operator is
\begin{equation}
\label{eq:temporal-laplace-system}
A_\sigma=H_{ii}+\operatorname{diag}((v+\sigma m_K)|_i).
\end{equation}
Consequently, temporal effects require no new spatial transport process. The existing Poisson marks yield the Laplace-domain weight
\begin{equation}
\label{eq:temporal-marked-weight}
W_{x,r}^{\sigma}
=\prod_{z\in\Pi_{x,r}}
\left(1-\frac{\kappa_\sigma(z)}{\mu}\right),
\quad 0\leq\kappa_\sigma\leq\mu.
\end{equation}
Its conditional expectation introduces the Laplace factor $e^{-\sigma\int_0^\tau m_K(X_t)\,dt}$. Hence screening, non-harmonic steering part and heat capacity enter through the same effective potential $\kappa_\sigma+V_h$, while drift steering contributes only an endpoint factor. These modifications can therefore be composed by reweighting the existing event records. At a fixed $\sigma$, the edit-local U-statistic in Sec.~\ref{sec:incremental-updates} applies directly with $v$ replaced by $\delta v+\sigma\delta m_K$, without introducing another estimator.

\begin{remark}[Extension to direct time-domain walks]
General transient problems, including nonzero initial conditions, can be addressed using time-budgeted walks such as Walk-on-Heat-Balls (WoHB)~\cite{deaconu2018heatballs}, which jointly sample spatial displacements and elapsed times. Let $s$ denote the operational time of the reference diffusion $X_s$, and define the physical clock $\Theta_m(s)=\int_0^s m_K(X_q)\,dq$ for a time-independent capacity field. A walk queried at time $t$ terminates either upon reaching the spatial boundary or when $\Theta_m(s)=t$, contributing boundary data at the remaining time or the initial condition at the stopping position, respectively.

To make such walks reusable, an extended event bank would retain ordered spatial transitions and operational-time increments, time-stamped Poisson events, and terminal states. For a screening field $\lambda(x,t)$, the marks before the stopping time $s_*=\min\{\tau,\Theta_m^{-1}(t)\}$ give the weight
\begin{equation}
\label{eq:temporal-space-time-reweight}
W_{\lambda,t}
=\prod_{\substack{(z,s)\in\Pi\\s<s_*}}
\left(1-\frac{\lambda(z,t-\Theta_m(s))}{\mu}\right),
\end{equation}
where $\mu$ bounds the screening rate. Changing $m_K$ modifies the physical clock and the stopping event, while changing $\lambda$ only requires reevaluating these weights. If the time budget expires within a heat-ball step, its intermediate position must be sampled from the corresponding conditional heat kernel. Exact replay for spatially heterogeneous capacity or screening also requires within-step occupation information, or sufficiently localized steps with constant coefficients. Thus screening edits admit direct event reweighting, whereas capacity edits additionally require reevaluating the physical clock and potentially the stopping location. Such time-domain reuse extends the present framework but requires richer space--time records than the elliptic implementation. 
\end{remark}

\subsection{Hybrid Continuous--Discrete Monte Carlo Walks}
\label{sec:hybrid-walks}

We couple continuous WoS with multiscale transport within each prefab. A prefab $K$ consists of coarse cells $\{K_j\}$, whose condensed operators $S_{K_j}$ define an internal transport skeleton. The fine-boundary kernels $F_{\kappa,j}$ connect continuous positions to this skeleton, while the coarse operators propagate the walk across cells. Upon reaching the outer boundary of the prefab, the exit location is recovered at fine resolution before returning to continuous WoS as described in Algorithm~\ref{alg:hybrid-walk}.

\paragraph{Continuous--coarse coupling.}
Let $M_K\subset E_{\mathrm{in},K}\Subset K$ denote the heterogeneous support, inner trigger region, and outer exit envelope, respectively, with a homogeneous buffer between $M_K$ and $\partial K$. Outside the trigger regions, WoS proceeds on spheres contained in the homogeneous exterior. Once a walk reaches $x\in E_{\mathrm{in},K}$, we locate its containing coarse cell $K_j$ and sample $\widehat F_{\kappa,j}(x,\cdot)$ using the spatial interpolation described in Sec.~\ref{sec:prefab-precomputation}. The sampled location lies on the boundary $\partial K_j$ of the coarse cell (not necessarily on the outer prefab boundary $\partial K$). Its corresponding port initializes the coarse walk while missing probability mass represents absorption.

\paragraph{Coarse transport and fine-boundary exit.}
The local operators are assembled into an internal skeleton by identifying shared coarse-cell ports, giving $S_{\mathrm{skel}}=\sum_j E_j^\top S_{K_j}E_j$, where $E_j$ maps local ports to skeleton degrees of freedom. Its conductances and killing rates induce the Markov transitions established in Sec.~\ref{sec:multiscale-probability}. Consequently, the walk traverses the prefab through coarse port transitions without resolving fine-grid diffusion inside each cell.

When the coarse walk leaves the skeleton through an exterior-facing port $\beta$, we restore its fine-boundary exit location using the first-exit statistics of the final coarse cell $K_j$. Conditioned on its final transition $\alpha\to\beta$, the exit distribution is
\begin{equation}
\label{eq:hybrid-boundary-lifting}
Q_{\alpha\beta}^{K_j}(b)= - \left(\displaystyle\sum_{a\in\Gamma_{j,\alpha}}c_{ab}
+\sum_{x\in i_j}R^{K_j}_{x\alpha}F_{\kappa,j}(x,b)\right) / (S_{K_j})_{\alpha\beta}
\end{equation}
Here $\Gamma_{j,\alpha}$ and $\Gamma_{j,\beta}$ are ports of $K_j$, and $i_j$ denotes its interior nodes. The numerator resolves the effective transport reaching the particular boundary node $b$, while summing over $\Gamma_{j,\beta}$ recovers the coarse inter-port conductance. The sampled exit is transformed into world coordinates, and exterior WoS resumes.

\begin{algorithm}[t]
\caption{Hybrid continuous--multiscale Monte Carlo walk}
\label{alg:hybrid-walk}
\begin{algorithmic}[1]
\State Advance exterior WoS until a prefab trigger or $\partial\Omega$.
\State Find $K_j\ni x$ and sample $b\sim\widehat F_{\kappa,j}(x,\cdot)$, or absorb.
\State Map $b$ to its coarse port $\alpha$ on the internal skeleton.
\State Walk using $S_{\mathrm{skel}}$ until reaching an outer port or absorption.
\State Sample the final fine exit $b_{\mathrm{out}}\sim Q_{\alpha\beta}^{K_j}$.
\State Resume exterior WoS from $b_{\mathrm{out}}$ and back to step 1. 
\end{algorithmic}
\end{algorithm}

\paragraph{Interface consistency.}
The three stages operate at distinct spatial resolutions: the local first-exit kernel connects a continuous position to the coarse-cell boundary, the condensed operators transport the walk across the internal skeleton, and the conditional exit distribution restores the location on the prefab boundary. This avoids repeated fine-scale simulation while retaining the boundary resolution needed for continuous coupling. The inner-trigger/outer-exit separation prevents immediate repeated switching at artificial interfaces. Screening, steering, and temporal edits update the compiled kernels and condensed operators through the event-reweighting mechanisms developed above, without modifying the exterior WoS sampler.

\subsection{Adaptive Multiscale Transport}
\label{sec:adaptive-multiscale-transport}

For each query $q$, we adaptively select the resolution of each prefab $K$ based on its spatial proximity to the query. Specifically, $K$ uses the fine level if $d(q,K)\le r$ and the coarse level otherwise, where $d(q,K)$ denotes the Euclidean distance to the prefab and $r$ controls the refinement radius. This assignment is determined once per query and remains fixed across all subsequent random walks, allowing nearby regions to retain detailed transport while distant regions use cheaper approximations. Both levels follow the same prefab interface, so switching resolution requires no modification to the global solver.

\section{Experiments}
\label{sec:experiments}

We evaluate how accurately reusable transport prefabs preserve heterogeneous diffusion, how their responses change under physical edits, and how much work they eliminate during repeated Monte Carlo queries. Our main benchmark uses a heterogeneous medium divided into $27$ coarse cells, with stochastic first-exit kernels coupled by an internal Schur-complement skeleton. The complete Hybrid alternates between continuous WoS outside the medium, cached first exits inside coarse cells, skeleton transitions, and fine-resolution exit reconstruction. We use independent finite-volume references for solution accuracy and single-threaded CPU timings. Additional local and multi-instance experiments examine individual components and spatial reuse.

\subsection{Stochastic Compilation and Multiscale Accuracy}
\label{sec:exp-compile}

\paragraph{Stochastic basis and coarse transfer.}
We first ask whether estimating local first exits introduces noticeable errors into the condensed operator that drives the coarse walk. On a strongly heterogeneous conductivity field, individual multiscale basis values exhibit visible Monte Carlo fluctuations, whereas their port-aggregated transfer response is substantially more stable. Table~\ref{tab:exp-basis} shows that increasing the compilation budget reduces both errors, but the transport response converges to a small error long before every basis value becomes accurate. Aggregation over the fine-scale boundary interactions therefore filters much of the sampling noise relevant to the coarse walk. This is the reason to compile and reuse port-level transport rather than reconstruct a uniformly precise interior basis for each query.

\begin{table}[t]
\centering
\caption{Convergence of stochastic basis functions and aggregated coarse transport on a heterogeneous local medium.}
\label{tab:exp-basis}
\small\setlength{\tabcolsep}{4pt}
\begin{tabular}{rrrr}
\toprule
Paths/node & Basis error & Transfer error & Build (s) \\
\midrule
64 & 12.08\% & 0.330\% & 1.45 \\
256 & 6.10\% & 0.126\% & 5.57 \\
1024 & 3.03\% & 0.083\% & 23.09 \\
\bottomrule
\end{tabular}
\end{table}

\paragraph{Port resolution and entrance interpolation.}
The Hybrid introduces spatial compression at two stages: the Schur skeleton retains a finite set of boundary ports, while the entrance-dependent exit kernel is interpolated from cached anchors. We vary these resolutions independently and measure the resulting full-solution error (Fig.~\ref{fig:exp-spatial-ablation}). With only eight ports, the skeleton is too coarse to preserve the exit dependence of smooth boundary data; increasing the port count sharply improves the solution, with gains tapering toward the finest configuration. The oscillatory case is less sensitive to port count in this particular field. Entrance-kernel spacing exhibits a different pattern: sparse anchors mainly degrade the smooth-field result, while denser anchors recover the spatial variation in where a walk exits a coarse cell. These observations separate the roles of coarse inter-port transport and spatially resolved entrance-to-exit interpolation.

\begin{figure}[t]
    \centering
    \includegraphics[width=0.9\linewidth]{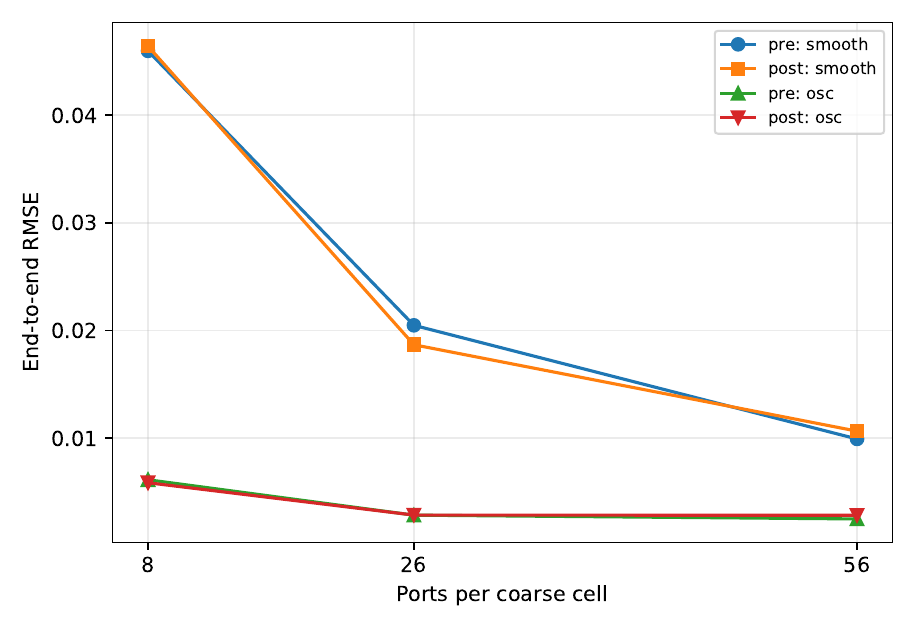}
    {\small Coarse-cell port count}
    \vspace{6pt}
    \includegraphics[width=0.9\linewidth]{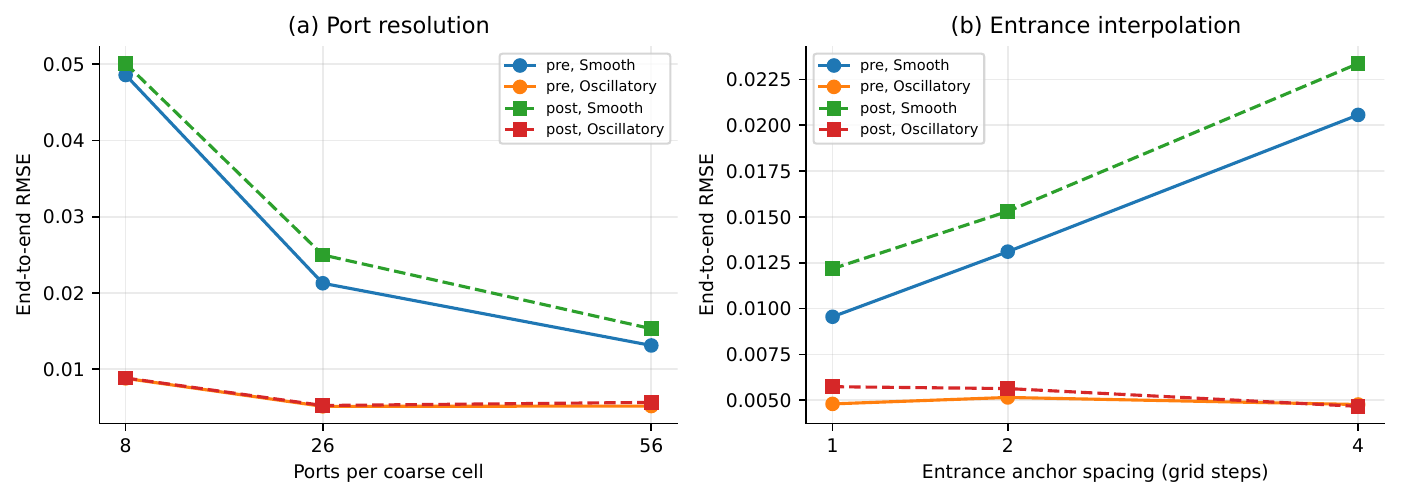}
    {\small Entrance interpolation spacing}

    \caption{Sensitivity of full-Hybrid solution error to port resolution and entrance-kernel interpolation spacing.}
    \label{fig:exp-spatial-ablation}
\end{figure}

\paragraph{Offline event-bank size.}
A larger event bank improves the estimates used to build the prefab but also increases its one-time compilation and storage costs. We therefore repeat the complete hybrid evaluation with independently compiled banks of increasing size while keeping the online sampling budget fixed (Fig.~\ref{fig:exp-banks}). The solution errors change only moderately and do not decrease monotonically across these finite-bank runs. This behavior indicates that, once the coarse transitions are reasonably resolved, online sampling and spatial approximation can dominate differences between event-bank sizes. The default bank balances reusable transport quality with compilation cost.

\begin{figure}[t]
\centering\includegraphics[width=\linewidth]{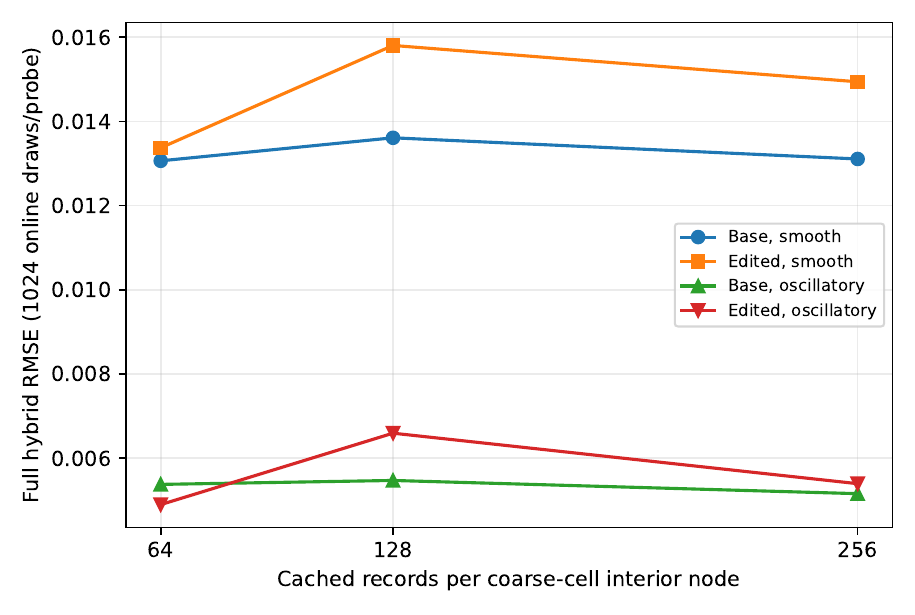}
\caption{Full-Hybrid solution error under different offline event-bank sizes at a fixed online budget.}
\label{fig:exp-banks}
\end{figure}

\subsection{End-to-End Solver Performance}
\label{sec:exp-queries}

We next compare the complete $F\!\to\!S\!\to\!$ fine-exit Hybrid with the official variable-coefficient WoS implementation of~\cite{sawhney2022variable}. Both solvers evaluate the same heterogeneous configuration before and after a screening edit, under smooth and oscillatory boundary conditions. The comparison covers the full query path rather than a standalone local exit sampler. We use the same number of online draws and assess both solutions against a common refined finite-volume reference.

The main difference is the work repeated during each query. Conventional WoS must revisit the heterogeneous coefficients as its paths evolve; the Hybrid replaces most of those interior steps with cached exits and coarse transitions. Consequently, the Hybrid achieves substantially lower online costs across all four configurations (Table~\ref{tab:exp-official}), with speedup ratios of approximately $22$--$28\times$ relative to the archived official WoS measurements. Its solution accuracy remains comparable to the direct solver: the Hybrid achieves lower RMSE in three of the four configurations, while direct WoS performs slightly better in the unedited smooth case. The online-budget comparison in Fig.~\ref{fig:exp-official} further illustrates the separation in computational cost as the number of sampled trajectories increases.

\begin{table*}[t]
\centering
\caption{End-to-end Hybrid and variable-coefficient WoS at matched online sample counts (4096 paths/query). Both methods are evaluated on the same machine against a common refined finite-volume reference. Time is for 12 queries; preprocessing is excluded.}
\label{tab:exp-official}
\small\setlength{\tabcolsep}{9pt}
\begin{tabular}{llrrrrr}
\toprule
& & \multicolumn{2}{c}{Hybrid $F\!\to\!S\!\to\!$exit} & \multicolumn{2}{c}{Official WoS~\cite{sawhney2022variable}} & Online ratio \\
\cmidrule(lr){3-4}\cmidrule(lr){5-6}
Medium & Boundary & RMSE & ms & RMSE & ms & Official/Hybrid \\
\midrule
Base & Smooth & 0.0099 & 137.5 & 0.0082 & 3078.9 & $22.4\times$ \\
Base & Oscillatory & 0.0025 & 114.2 & 0.0058 & 3130.3 & $27.4\times$ \\
Edited & Smooth & 0.0114 & 111.2 & 0.0124 & 3138.3 & $28.2\times$ \\
Edited & Oscillatory & 0.0030 & 113.5 & 0.0076 & 3043.8 & $26.8\times$ \\
\bottomrule
\end{tabular}
\end{table*}

\begin{figure}[t]
\centering
\includegraphics[width=\linewidth]{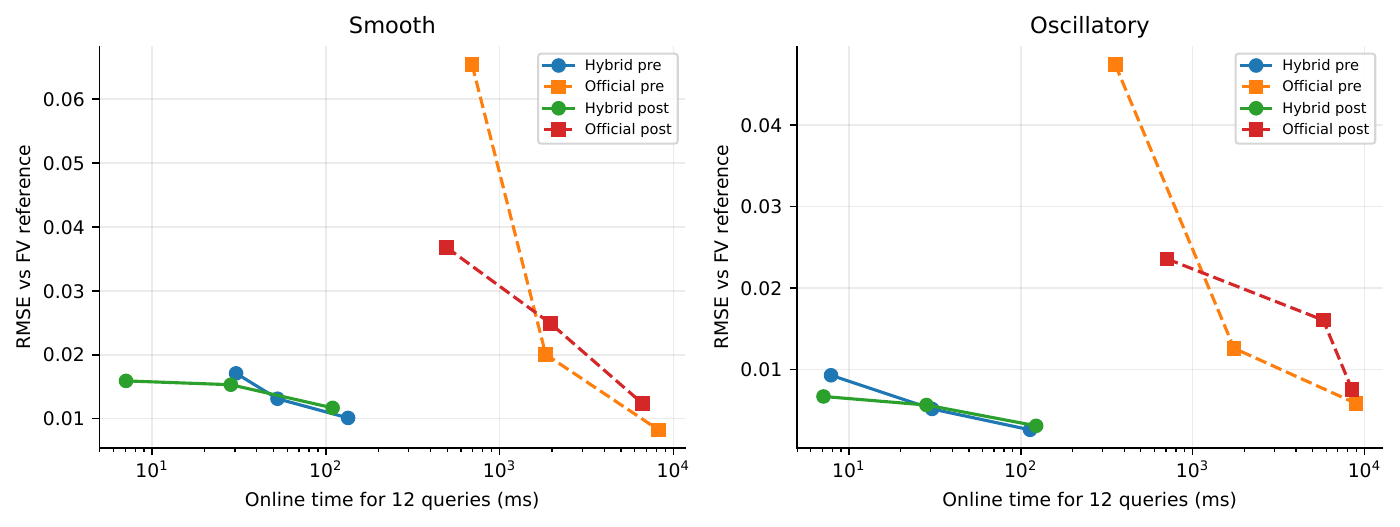}
\caption{Online runtime and solution error under increasing sample budgets for the two complete solvers.}
\label{fig:exp-official}
\end{figure}

\subsection{Incremental Editing Accuracy}
\label{sec:exp-increments}

\paragraph{Edit-local transport increments.}
The resolvent estimator in Sec.~4.2 updates the condensed transport operator from trajectories originating in the edited region. Its main challenge is not estimating a large operator response, but recovering the comparatively small change caused by an edit. We test screening and Laplace-domain capacity changes over 40 translated, rotated, and rescaled absorber poses, using independent discrete resolvent increments as references. Each pose is evaluated with four separately compiled event banks. Figure~\ref{fig:exp-pose-examples} shows representative absorber poses used in the replay and full-Hybrid edit experiments.

\begin{figure}[t]
\centering
\includegraphics[width=\linewidth]{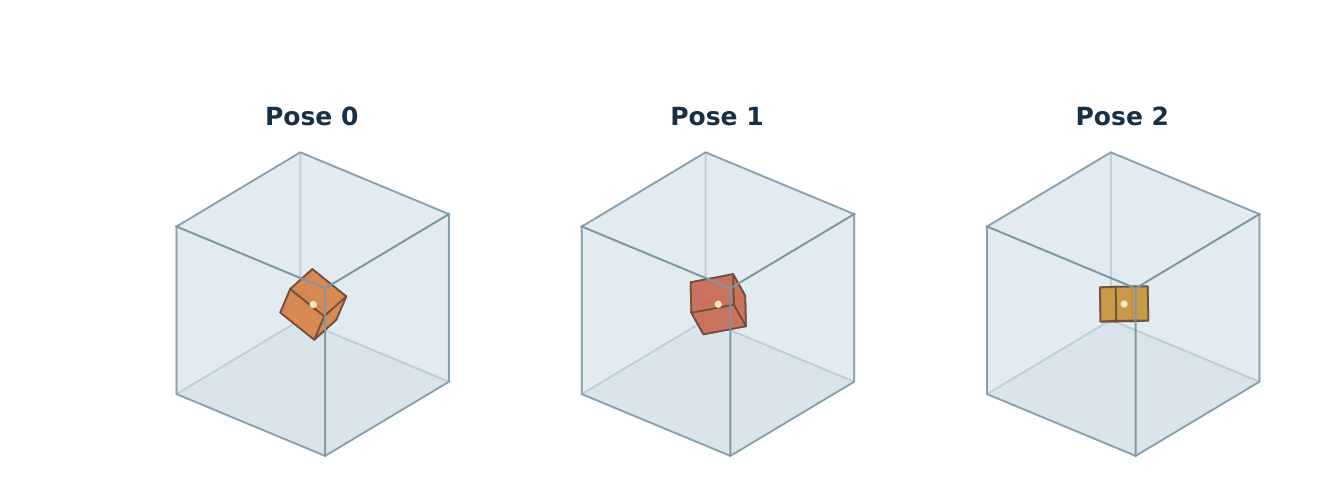}
\caption{Three representative three-dimensional absorber poses from the strong-edit experiments, shown as rotated and translated screening volumes within their host transport prefabs.}
\label{fig:exp-pose-examples}
\end{figure}

\begin{figure}[t]
\centering
\includegraphics[width=\linewidth]{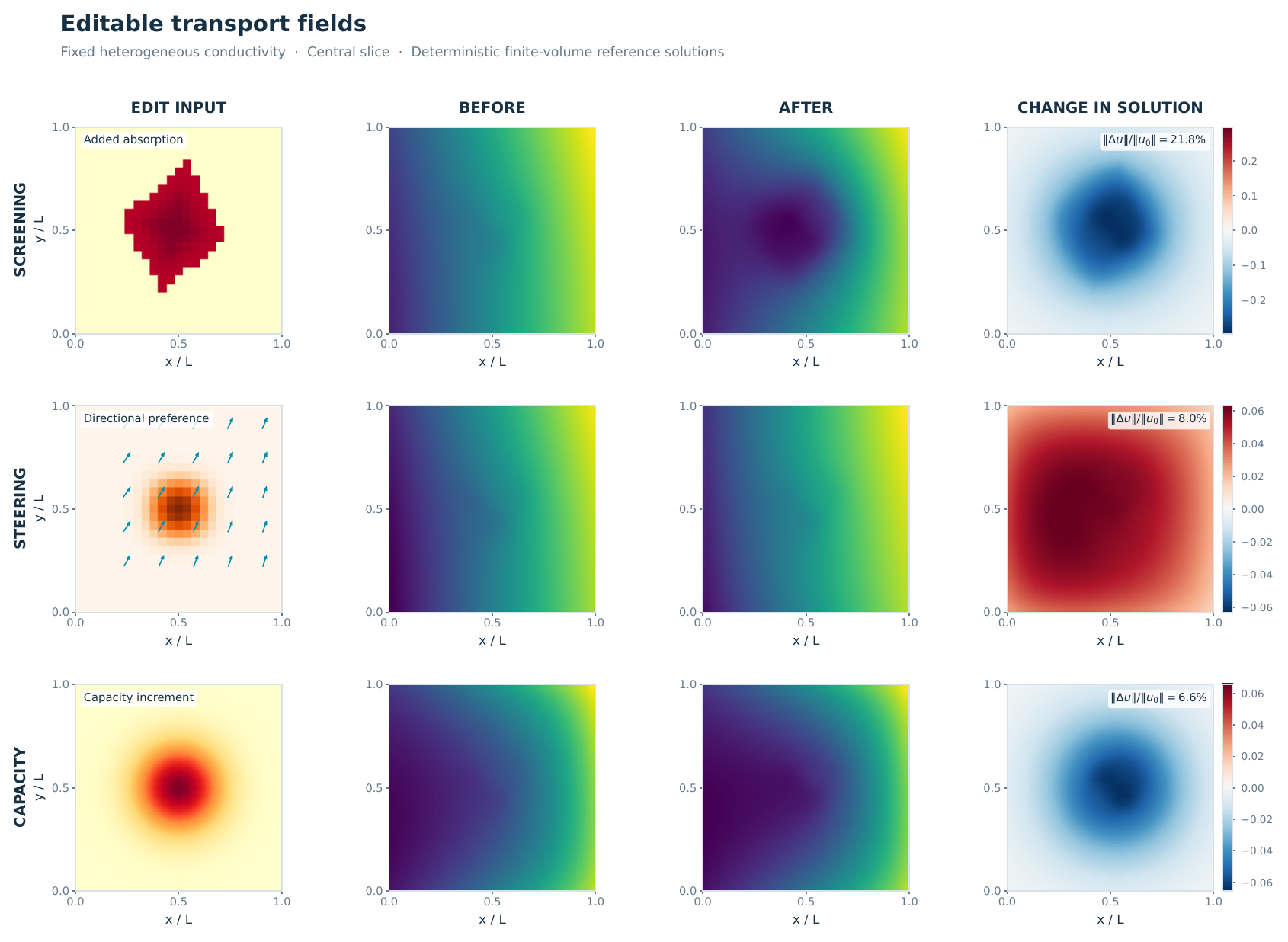}
\caption{Representative screening, harmonic steering, and Laplace-domain capacity edits. Each row shows the edit input, deterministic finite-volume reference solutions before and after editing, and the resulting solution difference.}
\label{fig:exp-edit-inputs}
\end{figure}

Figure~\ref{fig:exp-edit-inputs} summarizes the three edit controls evaluated in this section and makes explicit what is changed before the transport increments are estimated. The incremental estimates remain consistent across poses (Fig.~\ref{fig:exp-pose-errors}), with sub-percent average relative error for ordinary edit supports. When an edit occupies a much smaller region, the relative error rises: fewer starting nodes contribute, and the change being estimated is smaller. Screening and capacity edits follow almost the same pattern because both enter the local operator as nonnegative diagonal perturbations. Increasing the number of cached paths reduces the pose-level error (Fig.~\ref{fig:exp-replay-budget}), showing that the replay quality can be improved offline without regenerating trajectories for each edit.


\begin{figure}[t]
\centering\includegraphics[width=\linewidth]{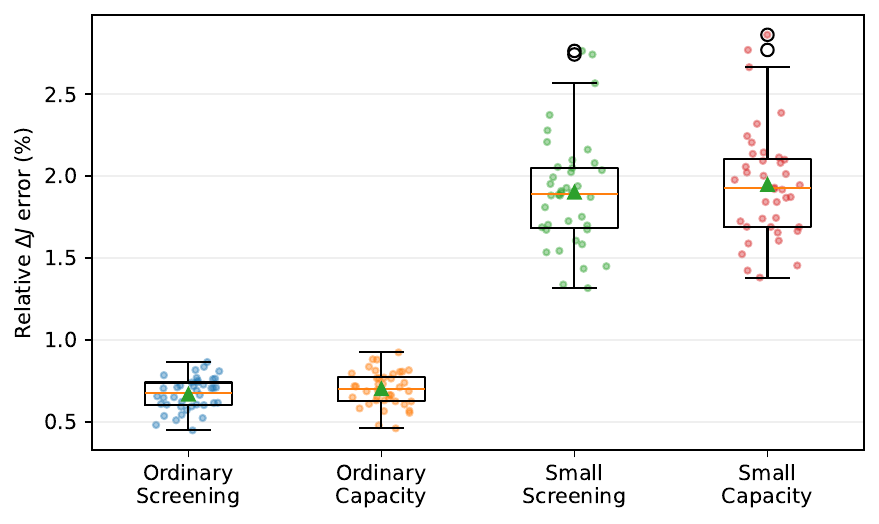}
\caption{Variation in incremental transfer error across absorber poses and edit-support sizes.}
\label{fig:exp-pose-errors}
\end{figure}

\begin{figure}[t]
\centering\includegraphics[width=\linewidth]{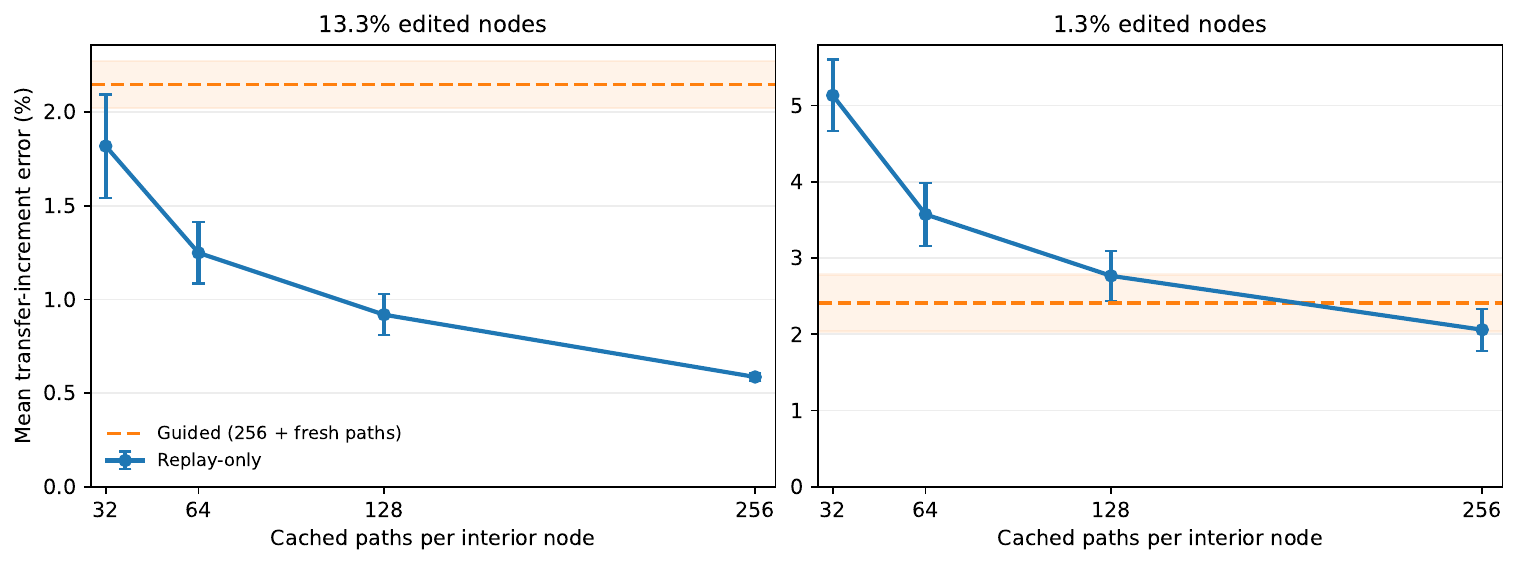}
\caption{Effect of the precompiled trajectory budget on edit-local transfer-increment accuracy.}
\label{fig:exp-replay-budget}
\end{figure}

\paragraph{Strong screening edits in the complete Hybrid.}
Accurate local operator increments must also produce meaningful changes in the assembled solution. We therefore apply 24 relatively strong screening edits of different shapes, positions, and orientations to the complete Hybrid, reusing a fixed event bank. These edits visibly alter the physical response rather than merely perturbing it near the Monte Carlo noise level. The estimated solutions track the overall changes in both boundary-condition families, although their incremental errors are more pronounced than in the isolated local-operator test (Table~\ref{tab:exp-strong-pose}). Figure~\ref{fig:exp-strong-pose} makes the distinction clear: a reliable local transfer update still passes through port compression, entrance interpolation, and random sampling when used in the full solver. The experiment therefore measures how those approximations accumulate during substantial material edits.

\begin{table}[t]
\centering
\caption{Full-Hybrid response accuracy across 24 strong screening edits after enforcing valid screening support.}
\label{tab:exp-strong-pose}
\small
\begin{tabular}{lrrr}
\toprule
Boundary & Change RMS & Final RMSE & $\Delta$ RMSE \\
\midrule
Oscillatory & 0.1074 & 0.0214 & 0.0211 \\
Smooth & 0.1472 & 0.0327 & 0.0281 \\
\bottomrule
\end{tabular}
\end{table}

\begin{figure}[t]
\centering
\includegraphics[width=\linewidth]{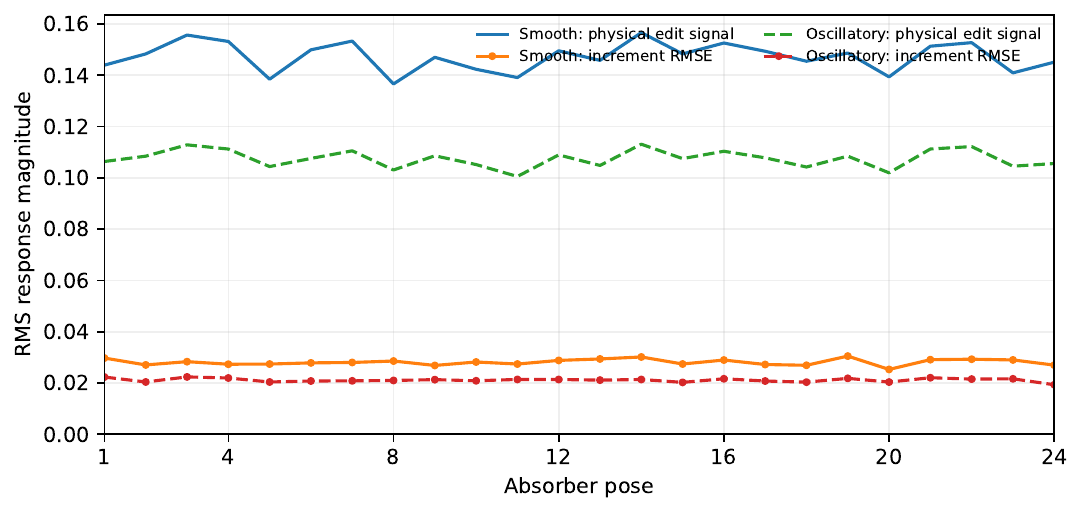}
\caption{Magnitude of the reference physical edit response and the corresponding full-Hybrid increment error.}
\label{fig:exp-strong-pose}
\end{figure}

\paragraph{Response to smaller edits.}
We also evaluate 48 more localized screening edits with two online sampling budgets (Fig.~\ref{fig:exp-48poses}). Here the final fields stay close to their references, but the physical edit signal is weaker and is harder to resolve separately from sampling noise. Increasing the online draws improves both final-solution and increment errors, particularly for the oscillatory boundary condition. Together with the strong-edit sweep, these results show why edit fidelity should be inspected through both the complete post-edit field and its change from the unedited solution: the two can behave differently for localized perturbations.

\begin{figure}[t]
\centering\includegraphics[width=\linewidth]{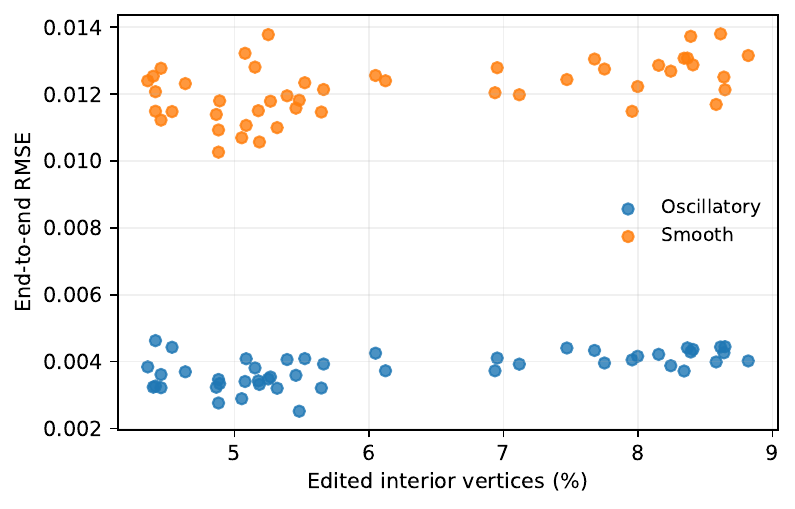}
\caption{Solution and incremental-response errors under 48 localized screening edits.}
\label{fig:exp-48poses}
\end{figure}

\subsection{Edit-Local Schur Updates}
\label{sec:exp-edits}

Repeated screening edits require updating the condensed transport operator. We compare our edit-local U-statistic with full Schur refactorization, with both methods reusing the same geometry, conductances, port mappings, and pre-edit operator. The local method reconstructs only the contributions of edited interior nodes from cached event records; the baseline refactorizes the interior matrices and recomputes the Schur complements of all $27$ coarse cells. We measure their operator-update costs for progressively larger edited regions, using $256$ cached records per interior node (Table~\ref{tab:exp-cost-detail} and Fig.~\ref{fig:exp-edit-cost}).

The local update is particularly inexpensive for small, spatially confined edits. Its cost grows as more edited nodes contribute to the resolvent correction, while full Schur reconstruction remains relatively stable because it refactorizes the same coarse-cell system regardless of edit support. Consequently, the greatest speedups occur for local edits, reaching $25.9$--$42.9\times$ below $5\%$ edited nodes. The advantage decreases gradually with support size but remains $1.8\times$ even when the entire interior is edited. This behavior illustrates the main benefit of caching transport: editing a small region requires updating its influence rather than reconstructing the complete operator.

\begin{table}[t]
\centering
\caption{Schur-operator update time for varying edit-support sizes (ms; median of five runs, 256 cached records/node).}
\label{tab:exp-cost-detail}
\small\setlength{\tabcolsep}{4pt}
\begin{tabular}{rrrr}
\toprule
Edited nodes & Local $\Delta S$ & Full $S$ & Speedup \\
\midrule
0.8\% & 5.2 & 186.8 & 36.2$\times$ \\
2.2\% & 5.0 & 215.7 & 42.9$\times$ \\
4.7\% & 7.9 & 205.1 & 25.9$\times$ \\
14.1\% & 35.0 & 163.1 & 4.7$\times$ \\
21.6\% & 42.1 & 176.4 & 4.2$\times$ \\
31.4\% & 55.2 & 170.7 & 3.1$\times$ \\
100.0\% & 91.8 & 163.9 & 1.8$\times$ \\
\bottomrule
\end{tabular}
\end{table}

\begin{figure}[t]
\centering
\includegraphics[width=\linewidth]{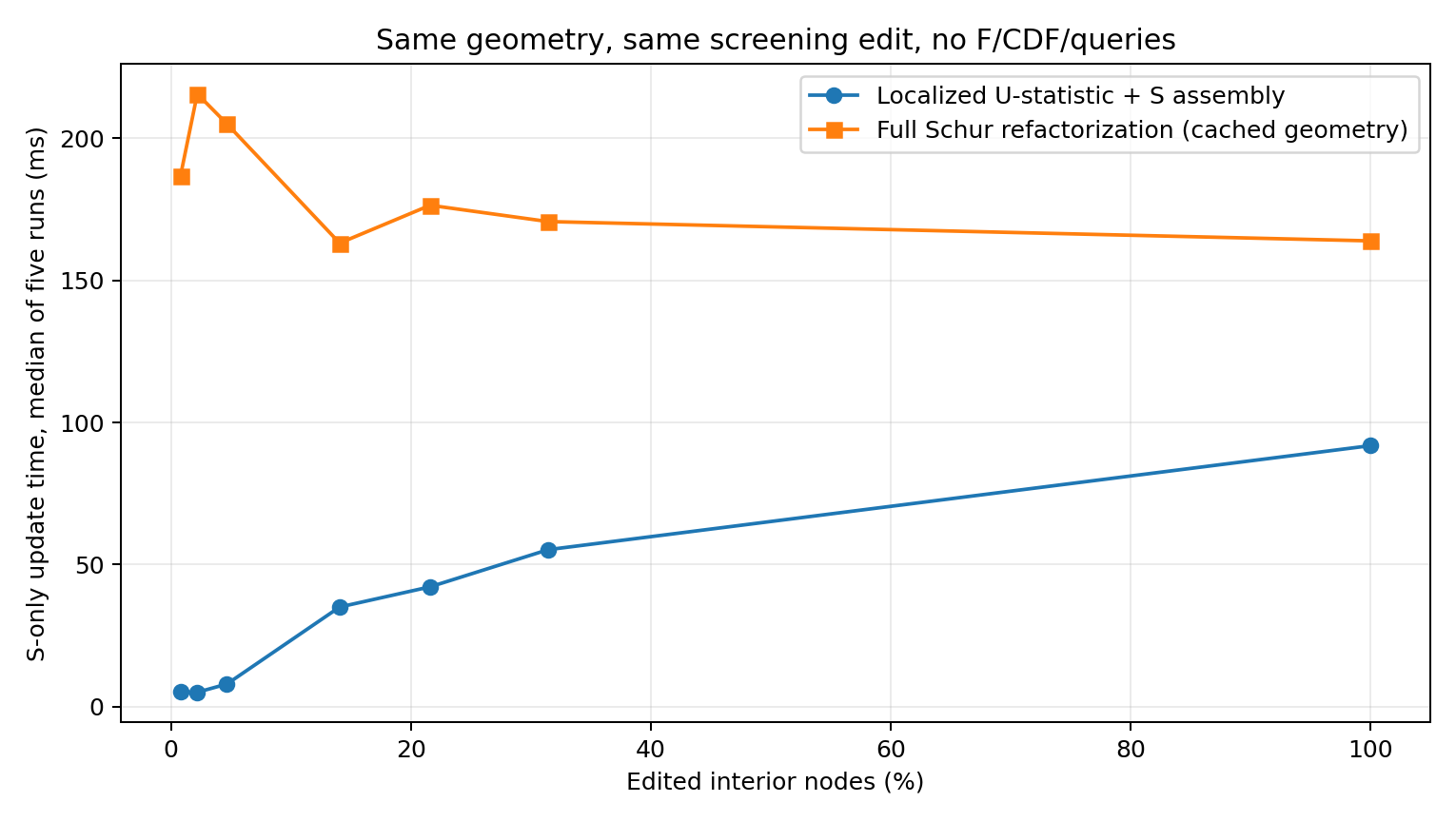}
\caption{Local and full Schur-update costs as the edited region expands.}
\label{fig:exp-edit-cost}
\end{figure}

\paragraph{Accuracy--cost trade-off.}
The local correction is estimated from finite cached paths, so we also measure how its approximation affects the coarse response. We evaluate each updated operator with the same exact post-edit interface coupling and compare its coarse solution with the exact edited Schur response. Increasing the event-record budget from $64$ to $256$ generally moves the local-update results toward lower response error (Fig.~\ref{fig:exp-edit-accuracy}). Across edit sizes, the approximate updates retain small coarse-response errors while requiring substantially less work than full refactorization. The curve exposes a controllable exchange between the number of reused samples and the fidelity of the edited transport operator.

\begin{figure}[t]
\centering
\includegraphics[width=\linewidth]{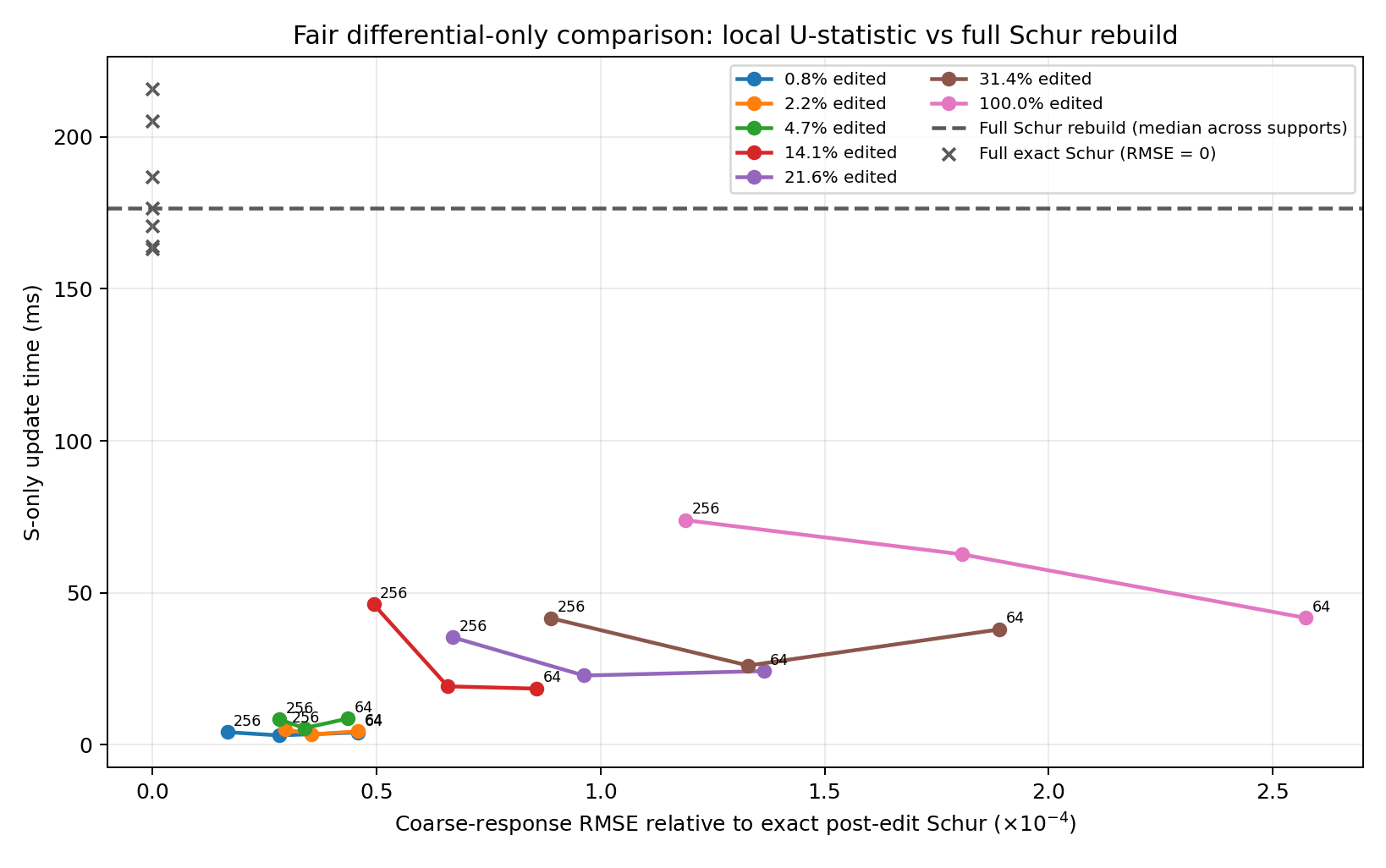}
\caption{Coarse-response error and Schur-update time for 64--256 cached records/node. Full refactorization gives the exact edited Schur operator.}
\label{fig:exp-edit-accuracy}
\end{figure}

\subsection{Transport Steering and Temporal Responses}
\label{sec:exp-controls}

\paragraph{Harmonic steering.}
To test whether the cached transport can be directed without changing the diffusion field, we assign positive preferences to the prefab ports and apply the harmonic endpoint reweighting of Sec.~4.3. The edited solution increasingly favors the selected exit directions while retaining the structure imposed by the heterogeneous medium. Against directly solved discrete reference problems, the estimate follows this change as steering becomes stronger (Fig.~\ref{fig:exp-controls}a). The discrepancy grows with steering strength because the same finite set of cached exits represents a more unevenly weighted distribution. No new spatial trajectories are needed to explore these steering choices.

\paragraph{Heat-capacity editing in the Laplace domain.}
A capacity change modifies the time scale of diffusion rather than the direction of its stationary transport. After Laplace transformation, the capacity field appears as an additional effective screening term, so the existing marked paths can be reweighted for different transform parameters. Figure~\ref{fig:exp-controls}b shows that the estimates remain close to the independently solved discrete responses throughout the tested range. Unlike steering, the relative error changes only slightly as the effective capacity contribution increases. This demonstrates reuse of the same local transport samples for a second, physically distinct edit control.

\begin{figure*}[t]
\centering
\begin{minipage}[t]{0.49\textwidth}\centering
\includegraphics[width=\linewidth]{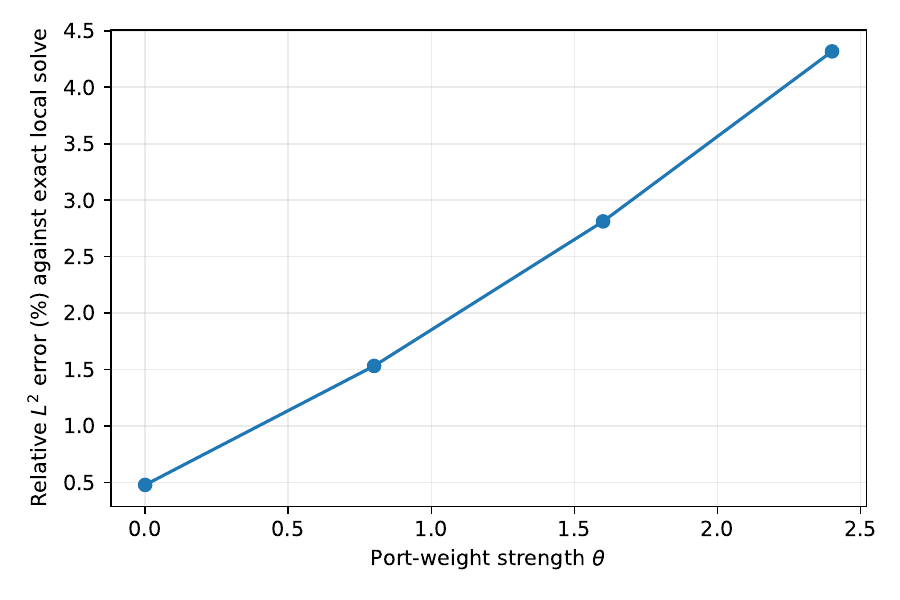}\\[-2pt]
{\small (a) Harmonic gradient steering}
\end{minipage}\hfill
\begin{minipage}[t]{0.49\textwidth}\centering
\includegraphics[width=\linewidth]{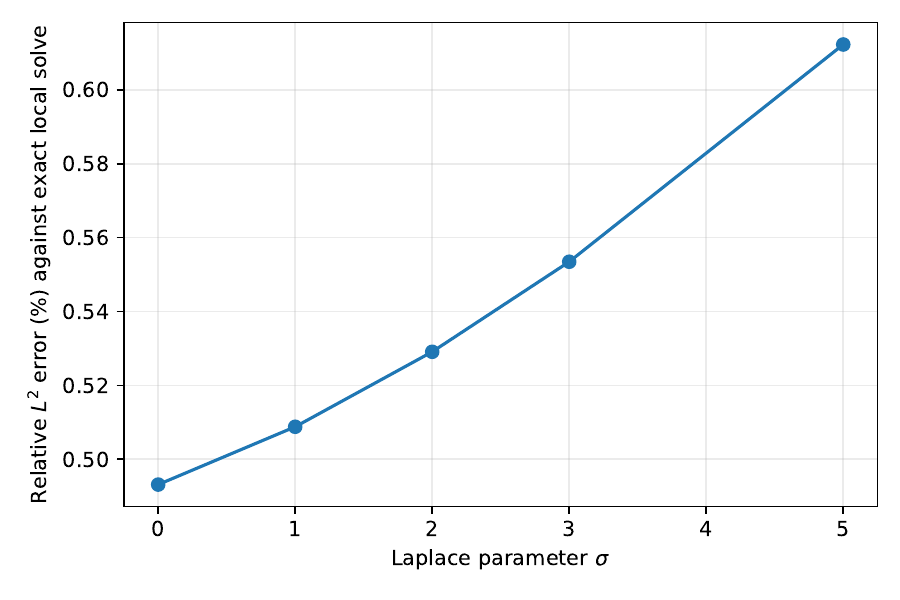}\\[-2pt]
{\small (b) Laplace-domain heat-capacity edits}
\end{minipage}
\caption{Accuracy of cached harmonic steering and Laplace-domain heat-capacity updates against direct discrete solutions.}
\label{fig:exp-controls}
\end{figure*}

\paragraph{Direct-time heat-ball transport.}
We additionally evaluate a direct time-domain implementation using WoHB. With a nonzero sinusoidal initial condition, the unit-cube heat equation has an analytic solution, allowing the effects of capacity and screening to be checked separately. The heat-ball estimates follow the expected decay across the tested capacities; increasing the capacity slows the evolution at a fixed physical time. Reweighting the same walks for different screening values produces estimates close to fresh direct-time sampling (Table~\ref{tab:exp-wohB}, Fig.~\ref{fig:exp-wohB}). Thus Poisson events also provide reusable screening information along time-dependent walks.

\begin{table}[t]
\centering
\caption{Direct heat-ball estimation and mark replay with nonzero initial data.}
\label{tab:exp-wohB}
\small\setlength{\tabcolsep}{6pt}
\begin{tabular}{rrr}
\toprule
Capacity $m$ & Direct WoHB MAE & Mark replay MAE \\
\midrule
1 & 0.0018 & 0.0014 \\
1.7 & 0.0065 & 0.0066 \\
3 & 0.0038 & 0.0037 \\
\bottomrule
\end{tabular}
\end{table}

\begin{figure}[t]
\centering\includegraphics[width=\linewidth]{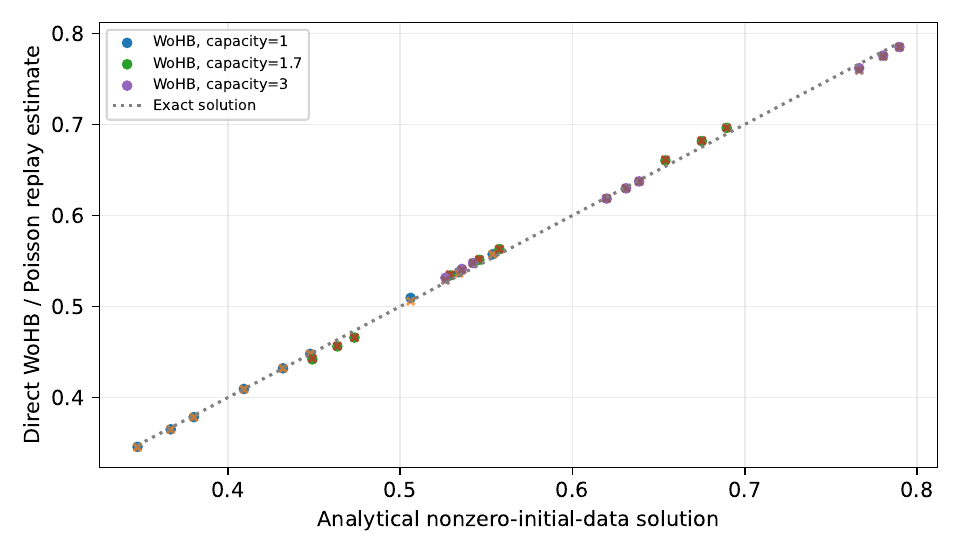}
\caption{Direct heat-ball and mark-replay estimates compared with the analytic transient solution.}
\label{fig:exp-wohB}
\end{figure}

\paragraph{Spatially varying capacity and cached clocks.}
We next consider a discrete conductance graph whose thermal capacity increases in a localized region. In this test, the cache also stores the visited states and holding times needed to reconstruct elapsed physical time. Replaying these histories under several capacity fields reproduces the corresponding matrix-exponential reference responses with similar accuracy (Fig.~\ref{fig:exp-capacity-clock}). The estimates capture the delayed propagation caused by increased local storage, demonstrating that spatial transport trajectories can be reused even when their physical clocks change.

\begin{figure}[t]
\centering\includegraphics[width=\linewidth]{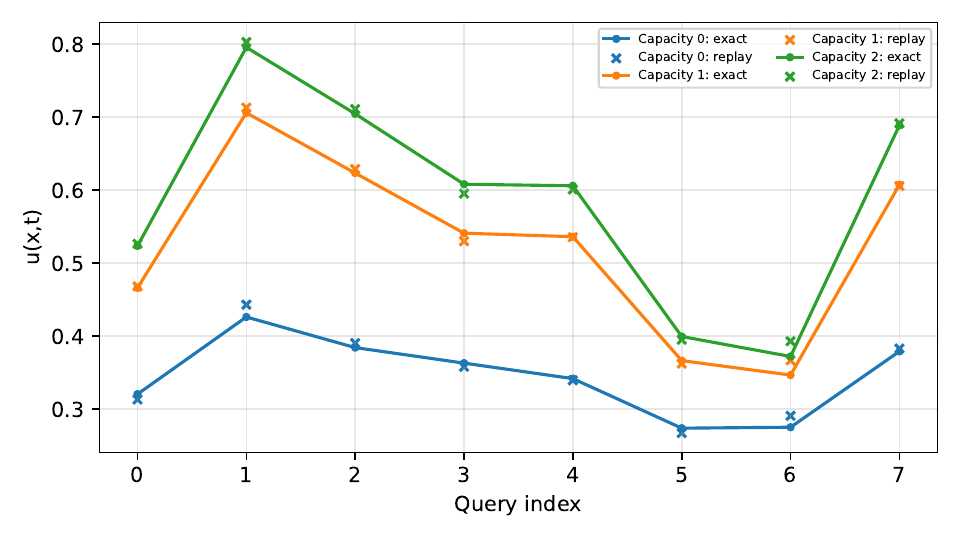}
\caption{Replayed physical-time clocks on a conductance graph compared with the transient discrete reference.}
\label{fig:exp-capacity-clock}
\end{figure}

\subsection{Spatial Reuse and Query-Centric Resolution}
\label{sec:exp-query-lod}
To evaluate query-centric resolution selection at scene scale, we randomly distribute 12 spatially separated instances of the same heterogeneous medium in a homogeneous exterior (Fig.~\ref{fig:exp-lod-scene}). Each instance reuses two precompiled transport representations, comprising $6^3$ and $24^3$ coarse cells, respectively. Both resolutions follow the same continuous--discrete hybrid walk, coupling local first-exit kernels, Schur-complement skeleton transport, and fine-boundary exit reconstruction. For each query $q$, an instance $K$ uses the fine representation if $d(q,K)\le r$ and the coarse representation otherwise, with the assignment fixed throughout all subsequent walks. We vary $r$ across successive query-to-prefab distance thresholds, progressively increasing the number of fine-resolution instances, and compare the resulting responses with an all-fine reference (Fig.~\ref{fig:query-lod-tradeoff}).

\begin{figure}[t]
    \centering
    \includegraphics[width=\linewidth]{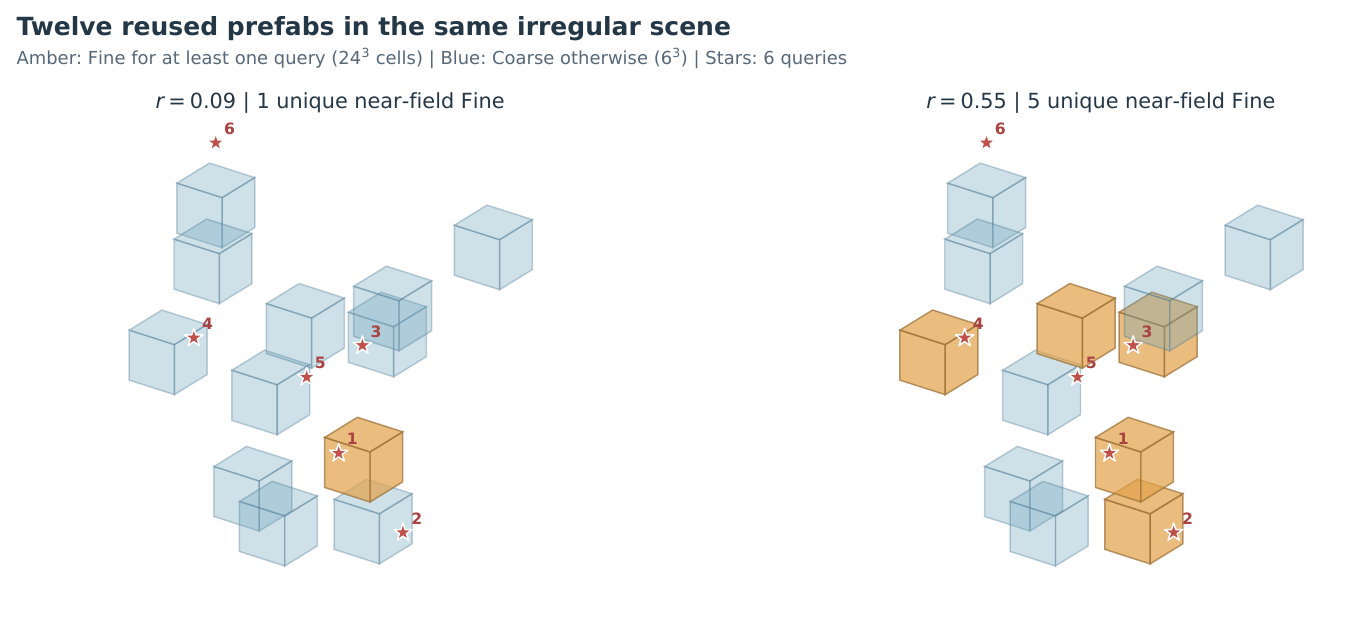}
    \caption{Spatial arrangement of 12 reusable transport prefabs in a homogeneous exterior. Six query locations induce different proximity-based resolution assignments, with nearby instances evaluated using the fine representation.}
    \label{fig:exp-lod-scene}
\end{figure}

As the refinement radius increases, the paired discrepancy between the adaptive LoD response and the all-fine response decreases rapidly as nearby prefabs switch to finer representations. Further refinement yields diminishing improvements, while the measured online CPU time generally increases, with minor timing fluctuations (Fig.~\ref{fig:query-lod-tradeoff}). Both smooth and oscillatory boundary conditions exhibit this behavior. The paired discrepancy isolates the effect of resolution selection under matched random sampling; it should not be interpreted as the absolute error relative to the continuum PDE solution. These results demonstrate a controllable trade-off between online computational cost and transport representation fidelity while reusing the same heterogeneous medium across instances.

\begin{figure}[t]
\centering
\includegraphics[width=\linewidth]{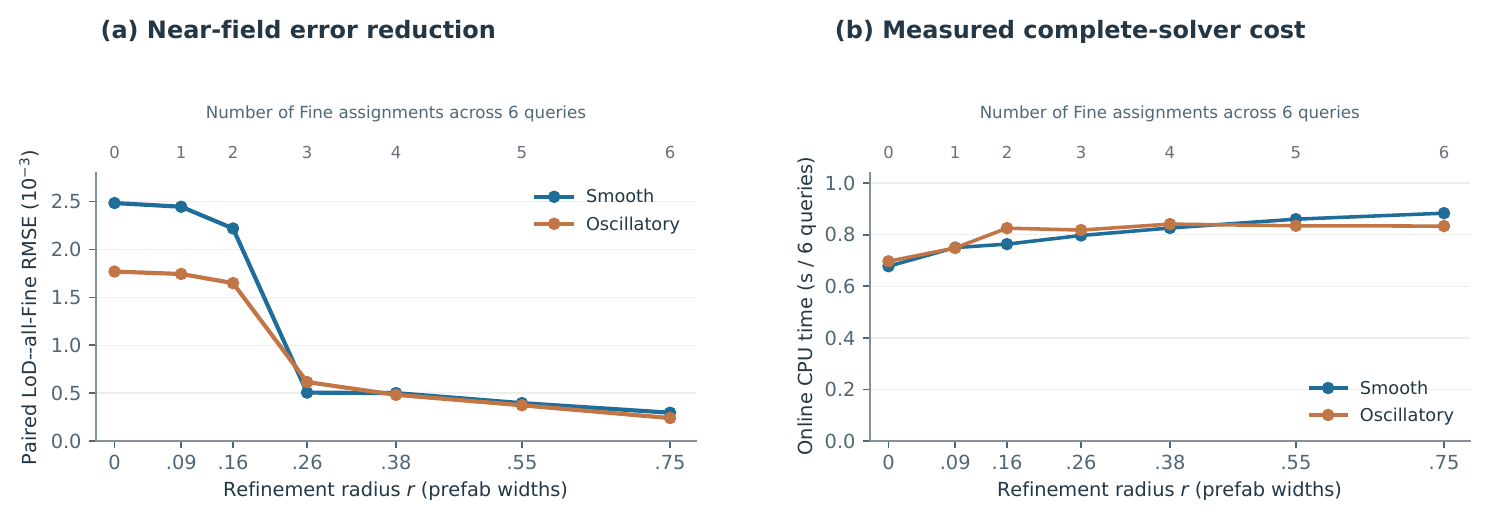}
\caption{Query-centric resolution selection across twelve irregularly placed prefabs. Left: paired LoD--all-Fine response discrepancy as nearby instances switch to finer precompiled representations. Right: measured complete-solver CPU time. The paired discrepancy measures representation differences rather than absolute PDE solution error.}
\label{fig:query-lod-tradeoff}
\end{figure}
\section{Conclusion}
We demonstrated that fine-scale heterogeneous transport can be compiled into reusable stochastic prefabs, enabling efficient multiscale Monte Carlo simulation and interactive material editing. By combining probabilistic operator condensation with cached-path replay, our framework substantially accelerates repeated queries and localized updates while supporting spatial composition, transport steering, and transient diffusion. The main limitation is that the compiled representation assumes fixed underlying diffusivity up to scaling and requires upfront preprocessing, making it most beneficial for repeatedly queried or edited scenes. Extending this framework to dynamically changing diffusion structures remains an important direction for future work.
\bibliographystyle{ACM-Reference-Format}
\bibliography{references}

@String{Springer = "Springer-Verlag" }

@book{bensoussan1978asymptotic,
  author = {Bensoussan, Alain and Lions, Jacques-Louis and Papanicolaou, George},
  title = {Asymptotic Analysis for Periodic Structures},
  publisher = {North-Holland},
  year = {1978},
}

@article{allaire1992homogenization,
  author = {Allaire, Gr{\'e}goire},
  title = {Homogenization and Two-Scale Convergence},
  journal = {SIAM Journal on Mathematical Analysis},
  volume = {23}, number = {6}, pages = {1482--1518}, year = {1992},
}

@book{jikov1994homogenization,
  author = {Jikov, Vladimir V. and Kozlov, Sergey M. and Oleinik, Olga A.},
  title = {Homogenization of Differential Operators and Integral Functionals},
  publisher = {Springer}, year = {1994},
}

@article{hou1997msfem,
  author = {Hou, Thomas Y. and Wu, Xiao-Hui},
  title = {A Multiscale Finite Element Method for Elliptic Problems in Composite Materials and Porous Media},
  journal = {Journal of Computational Physics},
  volume = {134}, number = {1}, pages = {169--189}, year = {1997},
}

@article{e2003hmm,
  author = {E, Weinan and Engquist, Bjorn and Huang, Zhongyi},
  title = {Heterogeneous Multiscale Method: A General Methodology for Multiscale Modeling},
  journal = {Physical Review B}, volume = {67}, pages = {092101}, year = {2003},
}

@article{efendiev2013gmsfem,
  author = {Efendiev, Yalchin and Galvis, Juan and Hou, Thomas Y.},
  title = {Generalized Multiscale Finite Element Methods ({GMsFEM})},
  journal = {Journal of Computational Physics},
  volume = {251}, pages = {116--135}, year = {2013},
  doi = {10.1016/j.jcp.2013.04.045}
}

@article{malqvist2014localization,
  author = {M{\aa}lqvist, Axel and Peterseim, Daniel},
  title = {Localization of Elliptic Multiscale Problems},
  journal = {Mathematics of Computation},
  volume = {83}, number = {290}, pages = {2583--2603}, year = {2014},
}

@article{owhadi2017multigrid,
  author = {Owhadi, Houman},
  title = {Multigrid with Rough Coefficients and Multiresolution Operator Decomposition from Hierarchical Information Games},
  journal = {SIAM Review}, volume = {59}, number = {1}, pages = {99--149},
  year = {2017},
}

@article{muller1956continuous,
  author = {Muller, Mervin E.},
  title = {Some Continuous {Monte Carlo} Methods for the {Dirichlet} Problem},
  journal = {The Annals of Mathematical Statistics},
  volume = {27}, number = {3}, pages = {569--589}, year = {1956},
}

@article{sawhney2020montecarlo,
  author = {Sawhney, Rohan and Crane, Keenan},
  title = {Monte Carlo Geometry Processing: A Grid-Free Approach to {PDE}-Based Methods on Volumetric Domains},
  journal = {ACM Transactions on Graphics},
  volume = {39}, number = {4}, articleno = {123}, year = {2020},
}

@article{sawhney2022variable,
  author = {Sawhney, Rohan and Seyb, Dario and Jarosz, Wojciech and Crane, Keenan},
  title = {Grid-Free Monte Carlo for {PDEs} with Spatially Varying Coefficients},
  journal = {ACM Transactions on Graphics},
  volume = {41}, number = {4}, year = {2022},
}

@article{sawhney2023stars,
  author = {Sawhney, Rohan and Miller, Bailey and Gkioulekas, Ioannis and Crane, Keenan},
  title = {Walk on Stars: A Grid-Free Monte Carlo Method for {PDEs} with Neumann Boundary Conditions},
  journal = {ACM Transactions on Graphics},
  volume = {42}, number = {4}, articleno = {80}, year = {2023},
}

@article{miller2023boundary,
  author = {Miller, Bailey and Sawhney, Rohan and Crane, Keenan and Gkioulekas, Ioannis},
  title = {Boundary Value Caching for Walk on Spheres},
  journal = {ACM Transactions on Graphics},
  volume = {42}, number = {4}, articleno = {82}, year = {2023},
}

@article{huang2026probes,
  author = {Huang, Wanchao and Zhu, Yutian and Fang, Qing and Liu, Ligang},
  title = {Probe-Based Walk on Spheres for Efficient Path Reusing},
  journal = {ACM Transactions on Graphics},
  volume = {45}, number = {4}, articleno = {129}, year = {2026},
}

@article{jambon2026decomposed,
  author = {Jambon, Cl{\'e}ment and Nabizadeh, Mohammad Sina and Konakovi{\'c} Lukovi{\'c}, Mina},
  title = {Walk on Decomposed Subdomains: A Hybrid {Monte Carlo}--Deterministic Solver for Elliptic {PDEs}},
  journal = {ACM Transactions on Graphics},
  volume = {45}, number = {4}, articleno = {132}, year = {2026},
}

@misc{maesumi2026steklov,
  author = {Maesumi, Arman and Makadia, Tanish and Anderson, Aruna and Phongpanangam, Oras and Solomon, Justin and Ritchie, Daniel},
  title = {Monte Carlo Steklov Operators for Large-Scale Geometry Processing in the Wild},
  year = {2026}, eprint = {2606.05581}, archivePrefix = {arXiv},
  url = {https://arxiv.org/abs/2606.05581}
}

@misc{cimpean2026extended,
  author = {C{\^i}mpean, Iulian and Grecu, Andreea and Zarnescu, Arghir},
  title = {Extended Walk-on-Spheres Algorithm for Linear and Nonlinear Elliptic Problems of Divergence-Type},
  year = {2026}, eprint = {2608.07017}, archivePrefix = {arXiv},
  url = {https://arxiv.org/abs/2608.07017}
}

@misc{bao2026heatstars,
  author = {Bao, Anchang and Shen, Enya and Zhang, Yongjun and Liu, Zhongwei and Wang, Jianmin},
  title = {Walking on Heat Stars for Parabolic Heat Equations with Neumann Boundary Conditions},
  year = {2026}, eprint = {2606.16578}, archivePrefix = {arXiv},
  url = {https://arxiv.org/abs/2606.16578}
}

@misc{zhou2026timedependent,
  author = {Zhou, Zihong and Sawhney, Rohan and d'Eon, Eugene and Jarosz, Wojciech},
  title = {Grid-Free Monte Carlo for Time-Dependent Diffusion},
  year = {2026}, eprint = {2609.12306}, archivePrefix = {arXiv},
  url = {https://arxiv.org/abs/2609.12306}
}

@article{deaconu2018heatballs,
  author  = {Deaconu, Madalina and Herrmann, Samuel},
  title   = {Initial-boundary value problem for the heat equation---A stochastic algorithm},
  journal = {The Annals of Applied Probability},
  volume  = {28},
  number  = {3},
  pages   = {1943--1976},
  year    = {2018},
}
\appendix
\end{document}